\documentclass[sigconf, nonacm]{acmart}

\AtBeginDocument{%
  }

\usepackage[most]{tcolorbox}
\tcbuselibrary{listingsutf8}
\usepackage{listings}
\usepackage{numprint}
\usepackage[hang,flushmargin]{footmisc} 
\usepackage{float}
\usepackage{url}
\usepackage{multirow}
\usepackage{enumerate}
\usepackage{enumitem}
\usepackage{appendix}
\usepackage{mathrsfs}
\usepackage{textcomp}
\usepackage{graphicx}
\usepackage{grffile}
\usepackage{mathtools, cuted}
\usepackage{dsfont}
\usepackage{amsthm}
\usepackage{diagbox}
\usepackage{subfig}
\usepackage{xcolor}
\usepackage{algorithmic}
\usepackage[ruled,linesnumbered,noend]{algorithm2e}
\usepackage{adjustbox}
\usepackage{diagbox}
\usepackage{bm}
\usepackage{xspace}
\usepackage{amsmath}
\usepackage{url}
\newtheorem{example}{Example}
\newtheorem{theorem}{Theorem}

\newtheorem{definition}{Definition}
\usepackage{mathtools}
\usepackage{bbm}
\usepackage[utf8]{inputenc} % allow utf-8 input
\usepackage[T1]{fontenc}    % use 8-bit T1 fonts
\usepackage{hyperref}       % hyperlinks
\usepackage{url}            % simple URL typesetting
\usepackage{booktabs}       % professional-quality tables
\usepackage{amsfonts}       % blackboard math symbols
\usepackage{nicefrac}       % compact symbols for 1/2, etc.
\usepackage{microtype}      % microtypography
\usepackage{xcolor}         % colors
\usepackage{amsmath}
\usepackage{makecell}
\usepackage{pifont}
\usepackage{multicol}
\usepackage{balance}
\usepackage[dvipsnames]{xcolor}
\usepackage{threeparttable}
\usepackage[normalem]{ulem}

\PassOptionsToPackage{square,numbers}{natbib}

\floatname{algorithm}{Function}
\floatname{algorithm}{Procedure}

\begin{document}

%%
%% The "title" command has an optional parameter,
%% allowing the author to define a "short title" to be used in page headers.
\title{Shape-Agnostic Table Overlap Discovery: A Maximum Common Subhypergraph Approach}

%%
%% The "author" command and its associated commands are used to define
%% the authors and their affiliations.
%% Of note is the shared affiliation of the first two authors, and the
%% "authornote" and "authornotemark" commands
%% used to denote shared contribution to the research.
\author{Ge Lee}
\affiliation{%
  \institution{RMIT University}
  \city{Melbourne}
  \country{Australia}}
\email{ge.lee@student.rmit.edu.au}

\author{Shixun Huang}
\affiliation{%
  \institution{University of Wollongong}
  \city{Wollongong}
  \country{Australia}}
\email{shixunh@uow.edu.au}

\author{Zhifeng Bao}
% \authornote{The corresponding author.}
\affiliation{%
  \institution{The University of Queensland}
  \city{Brisbane}
  \country{Australia}}
\email{zhifeng.bao@uq.edu.au}

\author{Felix Naumann}
\affiliation{%
  \institution{Hasso Plattner Institute}
  \institution{University of Potsdam}
  \city{Potsdam}
  \country{Germany}}
\email{felix.naumann@hpi.de}

\author{Shazia Sadiq}
\affiliation{%
  \institution{The University of Queensland}
  \city{Brisbane}
  \country{Australia}}
\email{s.sadiq@uq.edu.au}

\author{Yanchang Zhao}
\affiliation{%
  \institution{Data61, CSIRO}
  \city{Canberra}
  \country{Australia}}
\email{yanchang.zhao@data61.csiro.au}

%%
%% By default, the full list of authors will be used in the page
%% headers. Often, this list is too long, and will overlap
%% other information printed in the page headers. This command allows
%% the author to define a more concise list
%% of authors' names for this purpose.
%\renewcommand{\shortauthors}{Lee et al.}

%%
%% The abstract is a short summary of the work to be presented in the
%% article.
% for commenting
\newcommand{\huang}[1]{{\color{Maroon}{\bf{Huang comment:}} #1}}
\newcommand{\bao}[1]{{\color{purple}{\bf{Bao comment:}} #1}}
\newcommand{\lee}[1]{{\color{red}{\bf{Ge comment:}} #1}}

\newcommand{\revised}[1]{{\color{red} #1}}
\newcommand{\todo}[1]{{\color{red} #1}} 

\newcommand{\revII}[1]{{\color{DarkOrchid} #1}}  % reviewer 2
\newcommand{\revIII}[1]{{\color{RoyalBlue} #1}}  % reviewer 3
\newcommand{\revIV}[1]{{\color{BurntOrange} #1}}  % reviewer 4

% for commenting (sigmod revision plan)
\newcommand{\plan}[1]{{\color{Blue} #1}} % proposed revision plan

% pseudocode comment
\newcommand{\commt}[1]{\textit{  // #1}}

% problems
\newcommand{\salto}{\textsf{SALTO}}

% methods
\newcommand{\mcsplit}{\textsf{McSplit}}
\newcommand{\hyper}{\textsf{HyperSplit}}
\newcommand{\sloth}{\textsf{Sloth}}
\newcommand{\gpt}{\textsf{GPT-4.1}}
\newcommand{\crjaccard}{\textsf{CR-Jaccard}}
\newcommand{\jaccard}{\textsf{Jaccard}}

% datasets
\newcommand{\wiki}{WikiTables}
\newcommand{\git}{GitTables}

% colorbox
\newtcolorbox{egbox}{
    colback=gray!10, 
    colframe=white, 
    boxrule=0pt, 
    arc=2mm, 
    left=1.5mm, 
    right=1.5mm, 
    top=1mm, 
    bottom=1mm, 
    enhanced, 
    breakable
}
\newtcolorbox{revbox}{
    colback=gray!10, 
    colframe=white, 
    boxrule=0pt, 
    arc=1.5mm, 
    left=1.5mm, 
    right=1.5mm, 
    top=0.7mm, 
    bottom=0.7mm, 
    enhanced, 
    breakable,
    after skip=0.5em
}
\newtcolorbox{gptpromptbox}{
    colback=gray!5,
    colframe=black!50,
    boxrule=0.5pt,
    arc=2pt,
    outer arc=1pt,
    top=4pt,
    bottom=4pt,
    left=6pt,
    right=6pt,
    enhanced,
    breakable,
    listing only,
    listing options={
        basicstyle=\footnotesize\ttfamily,
        breakindent=0pt,
        breaklines=true,
        columns=fullflexible,
        numbers=none,
    }
}

\lstdefinestyle{promptcompact}{
    basicstyle=\ttfamily\scriptsize, % tighter than \footnotesize
    breaklines=true,
    breakatwhitespace=true,
    columns=fullflexible,
    keepspaces=true,
    showstringspaces=false,
    aboveskip=2pt, belowskip=2pt,
    literate={_}{\_}1 {–}{-}1 {—}{--}2 {“}{``}1 {”}{''}1 {’}{'}1 {…}{...}3
}
\newtcblisting{evbox}{
    breakable,
    enhanced jigsaw,
    colback=gray!3, colframe=gray!55,
    boxrule=0.35pt, arc=2pt,
    left=3pt,right=3pt,top=3pt,bottom=3pt,
    listing only,
    listing options={style=promptcompact},
    before skip=4pt, after skip=4pt,
    sharp corners,
    overlay unbroken and first={\vspace{1pt}},
}
\begin{abstract}
    Understanding how two tables overlap is useful for many data management tasks, but challenging because tables often differ in row and column orders and lack reliable metadata in practice. Prior work defines the \emph{largest rectangular overlap}, which identifies the maximal contiguous region of matching cells under row and column permutations. However, real overlaps are rarely rectangular, where many valid matches may lie outside any single contiguous block. In this paper, we introduce the \emph{Shape-Agnostic Largest Table Overlap} (\salto{}), a novel generalized notion of overlap that captures \emph{arbitrary-shaped, non-contiguous overlaps} between tables. 
    
    To tackle the combinatorial complexity of row and column permutations, we propose to model each table as a hypergraph, casting \salto{} computation into a maximum common subhypergraph problem. We prove their equivalence and show the problem is NP-hard to approximate. To solve it, we propose \hyper{}, a novel branch-and-bound algorithm tailored to table-induced hypergraphs. \hyper{} introduces (\romannumeral 1) hypergraph-aware label classes that jointly encode cell values and their row–column memberships to ensure structurally valid correspondences without explicit permutation enumeration, (\romannumeral 2) incidence-guided refinement and upper-bound pruning that leverage row–column connectivity to eliminate infeasible partial matches early, and (\romannumeral 3) a tolerance-based optimization mechanism with a tunable parameter that relaxes pruning by a bounded margin to accelerate convergence, enabling scalable yet accurate overlap discovery. Experiments on real-world datasets show that \hyper{} discovers overlaps more effectively (larger overlaps in up to 78.8\% of the cases) and more efficiently than state of the art. Three case studies further demonstrate its practical impact across three tasks: cross-source copy detection, data deduplication, and version comparison.
\end{abstract}

%%
%% The code below is generated by the tool at http://dl.acm.org/ccs.cfm.
%% Please copy and paste the code instead of the example below.
%%

%%
%% This command processes the author and affiliation and title
%% information and builds the first part of the formatted document.
\maketitle

% NOTE: For previous version with comments, see archive/2_tex_sigmod/1_introduction_v3.tex, for readability

\section{Introduction}\label{sec:intro}

Tables are a core data representation in databases, web pages, and data lakes. They store structured information about entities, but exhibit substantial structural diversity---columns and rows may appear in different orders or be partially missing, and metadata such as headers are often inconsistent or absent~\cite{adelfio2013schema, farid2016clams, nargesian2019data, zhang2020finding, cafarella2008webtables, wang2023solo, webdatacommons, pimplikar2012answering} (e.g., about 20\% of Web tables lack identifiable headers~\cite{webdatacommons, pimplikar2012answering}). Figures~\ref{subfig:example_a} and~\ref{subfig:example_b} show two Olympic medal tables (content sourced from Wikipedia~\cite{wikipedia_olympic}) containing nearly identical information. Yet, their column orders differ---one starts with athlete names, while the other begins with medal types---and the row orders are inconsistent. Without reordering, most overlapping content remains hidden. Such structural discrepancies and missing headers make it inherently difficult to determine which cells correspond across tables, even when the underlying data are the same.

This raises a fundamental question: \emph{which parts of two tables truly overlap?} In other words, can we recover all matching cells despite arbitrary row and column permutations and in the absence of metadata? \sloth{}~\cite{zecchini2024determining} made the first step toward this goal by defining the \textbf{\emph{largest rectangular-shaped table overlap}} and identifying it using only on cell values. Accurately discovering such overlaps facilitates a range of data management tasks: (i)~\emph{cross-source copy detection}~\cite{li2012truth, li2015scaling, dong2009truth, dong2010global}, identifying and grouping reused or replicated tables across web repositories; (ii)~\emph{data deduplication}~\cite{koch2023duplicate, chu2016data, ilyas2015trends}, which cleans and consolidates overlapping content and allows resolving data inconsistencies~\cite{chu2016data, zecchini2024determining}; and (iii)~\emph{version comparison}~\cite{bleifuss2018exploring}, which accurately localizes overlaps to trace how content evolves across successive table revisions. All these applications rely on the same core operation, that is, exactly determining the shared cells between two tables, which is the focus of this work. 

\smallskip\noindent\textbf{Rectangular Overlap vs. Shape-Agnostic Overlap.}
\sloth{}~\cite{zecchini2024determining} was the first to formalize the problem of finding the \emph{largest rectangular overlap} and remains the state of the art. It defines overlap as a maximal region of matching cells after permuting the rows and columns of both tables. This formulation captures the largest aligned region but imposes a strict rectangular constraint, that is, \emph{all matching cells must lie within one contiguous block spanning consecutive rows and columns}. While effective for perfectly aligned cases, \emph{this restriction fails to capture the irregular, non-contiguous overlap patterns that frequently occur in real data}. 

Ultimately, the definition of overlap itself remains an open question: \emph{is a rectangular notion truly sufficient to represent the full correspondence between two tables?} We argue that a large portion of overlap often lies outside any single rectangle and remains undetected. This argument is further supported by our experimental results and case studies in Sections~\ref{subsec:effectiveness} and~\ref{subsec:case_studies}. Therefore, we remove \sloth{}'s rectangular constraint and \emph{allow overlaps of arbitrary shape}, including disjoint and non-contiguous matches across both dimensions. We call this generalized notion the \textbf{\emph{\uline{S}hape-\uline{A}gnostic \uline{L}argest \uline{T}able \uline{O}verlap} (\salto{})}, which captures those previously hidden correspondences. Example~\ref{ex:largest_overlap} illustrates their difference.

\begin{egbox}
    \begin{example}[]\label{ex:largest_overlap}
        Figures~\ref{subfig:example_a} and~\ref{subfig:example_b} show two tables containing Olympic medal data. At first glance, identifying their overlap is difficult due to differences in row and column ordering, inconsistent data entries and missing column headers. To identify the largest overlap, we reorder the rows and columns of both tables so that the overlapping cells are aligned in position, as shown in Figure~\ref{subfig:example_c}. In this case, the orange-bordered region shows the largest rectangular overlap detected by \sloth{}, covering 3 columns and 6 rows (18 cells). 
        
        \hspace{2mm} However, there are 8 additional overlapping cells outside this rectangular region that are still valid and meaningful matches, such as ``Usain Bolt'', ``JAM'' and others. 
        
        \hspace{2mm} Our shape-agnostic overlap includes both rectangular and non-rectangular, non-contiguous overlaps, yielding a more complete and true intersection comprising 26 cells in total, as shown in the blue-bordered region.
    \end{example}
\end{egbox}

\smallskip\noindent\textbf{Limitations of Other Approaches.}
Astute readers may find several seemingly feasible alternatives, yet these methods \textbf{\emph{do not}} identify exact overlapping cells. One might consider treating tables as multisets of cells and defining their overlap by simple intersection. Another direction is to apply data integration techniques such as schema matching~\cite{hong2002coma, aumueller2005schema, liu2025magneto}, entity resolution~\cite{wu2020zeroer, wang2023sudowoodo}, or unionable table discovery~\cite{nargesian2018table, khatiwada2023santos, fan2023semantics}. Their objectives differ fundamentally and often rely on metadata or contextual cues. In contrast, our goal is to identify \textbf{\emph{exact, cell-level overlap}} purely from cell values. A detailed comparison is provided in Section~\ref{sec:related_work}.

\begin{figure}[t]
    \centering
    \subfloat[Olympic Medal Table A]{
        % \hspace{-0.5em}
        \includegraphics[width=0.2\textwidth]{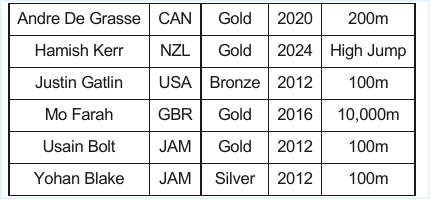}
        \label{subfig:example_a}
    }
    \subfloat[Olympic Medal Table B]{
        \hspace{-1.2em}
        \includegraphics[width=0.27\textwidth]{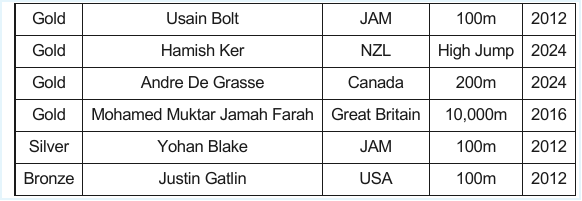}
        \label{subfig:example_b}
    }
    \vfill
    \vspace{-1.2em}
    \subfloat[Largest overlaps after aligning Table A (left) and Table B (right) via row and column permutations]{
        \includegraphics[width=0.47\textwidth]{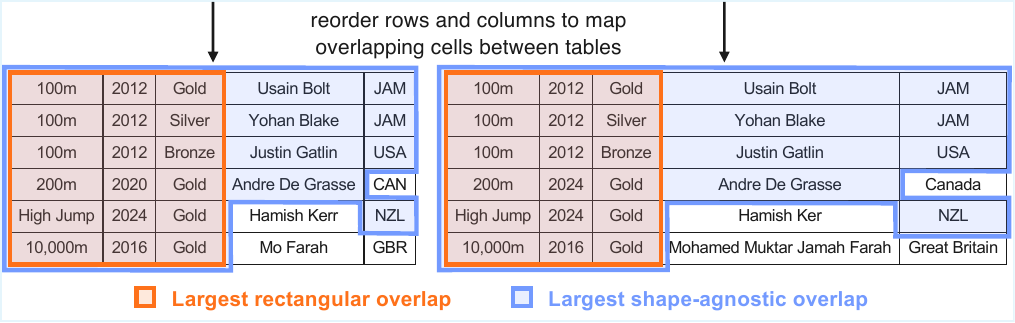}
        \label{subfig:example_c}
    }
    \vspace{0.5em}
    \caption[]{Comparison between \sloth{}'s largest rectangular overlap and our largest shape-agnostic overlap}
    % \vspace{-1em}
\end{figure}

\smallskip\noindent{\textbf{Our Solution.}}
The main computational challenge arises from the factorial growth of the permutation space, which makes exhaustive search infeasible. To overcome this, we formulate each table as a \emph{hypergraph}, casting the computation of \salto{} into a \emph{maximum common subhypergraph} problem between the two table-induced hypergraphs. This formulation avoids explicit permutation enumeration while preserving row and column invariance.

Building on this foundation, we propose \hyper{}, a novel branch-and-bound algorithm designed specifically for table-induced hypergraphs. Standard graphs represent pairwise relationships through binary edges, which cannot capture the dual row–column dependencies inherent in tabular structures. In contrast, our formulation models each table as a hypergraph with higher-order edges that connect multiple cells simultaneously, enforcing joint consistency across both dimensions and introducing structural constraints beyond those in conventional graphs. To address these challenges, \hyper{} introduces three key innovations:
(i)~hypergraph-aware label classes that jointly encode cell values together with their row- and column-hyperedge memberships, ensuring structurally valid correspondences between candidate cells without explicit permutation enumeration;
(ii)~incidence-guided refinement and upper-bound pruning that exploit row–column connectivity to constrain candidate mappings and aggressively prune infeasible search branches; and 
(iii)~a tolerance-based optimization mechanism with a tunable parameter that relaxes the exact pruning condition by a small bounded margin, accelerating convergence while maintaining overlap quality.
Together, these components allow \hyper{} to efficiently identify shape-agnostic overlaps across large and structurally diverse tables.
Our contributions are summarized below:

\begin{itemize}[leftmargin=*]
    \item We propose and formalize, for the first time, the \emph{Shape-Agnostic Largest Table Overlap} (\salto{}) problem, which generalizes rectangular overlap by allowing non-rectangular and non-contiguous matches (Section~\ref{sec:pf}).
    \item We cast the problem of finding \salto{} into that of finding a maximum common subhypergraph by introducing a hypergraph representation for tables (Section~\ref{subsec:prob_reform}) and proving the equivalence between the two formulations (Section~\ref{subsec:proofs}).
    \item We show that the maximum common subhypergraph problem is NP-hard to approximate (Section~\ref{subsec:hardness}).
    \item We propose \hyper{}, an efficient algorithm designed for table-induced hypergraphs that integrates hypergraph-aware labeling, incidence-guided pruning, and tolerance-based optimization to enable scalable and structure-preserving \salto{} detection (Section~\ref{sec:our_solution}).
    \item We conduct comprehensive experiments on real-world datasets, showing that our method identifies overlaps more effectively (larger overlaps in up to 78.8\% of the cases) and more efficiently (Section~\ref{sec:exp}). 
    \item Case studies further demonstrate its practical impact across \emph{cross-source copy detection}, \emph{data deduplication}, and \emph{version comparison} tasks, revealing more complete overlaps without relying on metadata across diverse scenarios  (Section~\ref{subsec:case_studies}).
\end{itemize}
\section{Problem Formulation}\label{sec:pf}

We begin by defining tables, row and column permutations, and the notion of table overlap. We then formalize the shape-agnostic largest table overlap (\salto{}) problem, which seeks the maximum overlap achievable over all possible row and column permutations.

A \emph{table} consists of multiple rows and columns, where each cell contains a value, either a string, a number or null. Given two tables, their \emph{overlap} refers to the set of cells with exactly matching values, including nulls, at the same positions. Since reordering rows and columns does not alter the semantics of a table, we aim to find the \emph{largest possible overlap} achievable through such permutations (see the blue-bordered cells in Figure~\ref{subfig:example_c}). 

\begin{definition}[Table]
    A table $T$ is a two-dimensional structure with $m$ rows and $n$ columns, where each cell $c_{i,j}$ denotes the value at the $i$-th row and the $j$-th column. Formally, $T = \{c_{i, j} \mid 1 \leq i \leq m,\ 1 \leq j \leq n\}$.
\end{definition}

\begin{definition}[Row and Column Permutations]
    Let $I = \{1, \dots, m\}$ denote the row indices of table $T$. A row permutation is a bijection $\sigma: I \to I$ that reorders the row indices by mapping each $i \in I$ to some position $\sigma(i) \in I$. Similarly, a column permutation is a bijection $\tau: J \to J$ for column indices $J = \{1, \dots, n\}$.
\end{definition}

\begin{definition}[Table Overlap]
    Let $T_1$ and $T_2$ be two tables of size $m_1 \times n_1$ and $m_2 \times n_2$, with cell values denoted by $c^{(1)}$ and $c^{(2)}$ respectively. Let $\sigma_1$ and $\sigma_2$ be row permutations for $T_1$ and $T_2$, and $\tau_1$ and $\tau_2$ be the corresponding column permutations. The overlap between $T_1$ and $T_2$ under the permutation tuple $\pi = (\sigma_1, \tau_1, \sigma_2, \tau_2)$ is the set of positions where the permuted cell values are equal. Formally, the permutation-dependent overlap is:
    \[
        O_{\pi} = \left\{ (i, j) \mid c^{(1)}_{\sigma_1(i),\, \tau_1(j)} = c^{(2)}_{\sigma_2(i),\, \tau_2(j)} \right\},
    \]
    where $1 \leq i \leq \min(m_1, m_2)$ and $1 \leq j \leq \min(n_1, n_2)$.
\end{definition}
% \huang{based on def 2, permutation function takes a set as input whereas in def 3, the input is an index, which is inconsistent. one solution: introduce a notation to describe new c after perumtation based def 2.}

\begin{definition}[Shape-Agnostic Largest Table Overlap]\label{def:lolap}
    Let $\mathcal{S}$ be the set of all possible tuples $\pi = (\sigma_1, \tau_1, \sigma_2, \tau_2)$. The shape-agnostic largest table overlap (\salto{}) is the overlap achieved under the permutation $\pi^* \in \mathcal{S}$ that maximizes the overlap size, i.e., $O_{\pi^*}, \text{where } \pi^* = \arg\max_{\pi \in \mathcal{S}} |O_\pi|.$
\end{definition}

Definition~\ref{def:lolap} maximizes the number of exactly matching cells under a single globally consistent row/column permutation tuple $\pi^*$. Thus, $O_{\pi^*}$ is not required to be rectangular and may be unconnected, but it is not arbitrary because every matched cell must respect the same alignment. \salto{} optimizes overlap size only and does not enforce geometric compactness or a rectangular prefix. If a use case requires a rectangular prefix, one could define a constrained variant that maximizes $|O_\pi|$ subject to containing an $m'\!\times\! n'$ rectangle. The rectangle-constrained \sloth{} searches the same permutation space but restricts the overlap to one rectangle, so its optimum size is no larger than $|O_{\pi^*}|$. However, the specific rectangle returned by \sloth{} need not be contained in $O_{\pi^*}$ because the two objectives may select different optimal permutations.

\section{Casting \salto{} to Maximum Common Subhypergraph}\label{sec:cast}

Consider two tables with potentially different numbers of rows and columns. Identifying their \salto{} requires exploring all rows and columns permutations in both tables, resulting in a worst-case search space of $\mathcal{O}(m_1! \cdot n_1! \cdot m_2! \cdot n_2!)$. If one table has fewer rows and columns than the other, we can fix its order and permute only the larger table, reducing the space to $\mathcal{O}(m! \cdot n!)$. Nevertheless, even this reduced space remains prohibitively large in practice. This combinatorial explosion motivates a structural reformulation that eliminates the need for exhaustive permutation search.

To address this challenge, we propose a novel representation of tables as hypergraphs, where each common table cell value, i.e., a cell value that appears in both tables, is modeled as a node, and rows and columns are encoded as two separate sets of hyperedges. This transformation offers three key benefits: (\romannumeral 1)~it fully preserves the structural properties of the tables; (\romannumeral 2)~it eliminates the need to explicitly search over the extremely large combinatorial space of row and column permutations; (\romannumeral 3)~it naturally reduces the computation to only those cells that appear in both tables, which are the only possible candidates for forming the overlap. Under this representation, finding the \salto{} is equivalent to finding a maximum common subhypergraph between the two hypergraphs. 

In what follows, we cast the \salto{} problem into the maximum common subhypergraph problem. Section~\ref{subsec:prob_reform} constructs hypergraphs from tables and outlines how table structure is captured, Section~\ref{subsec:mcs} then formulates the maximum common subhypergraph problem on the resulting hypergraphs, and Section~\ref{subsec:proofs} proves the equivalence between the two formulations, completing the casting.

\begin{figure}[t]
    \centering
    \includegraphics[width=0.45\textwidth]{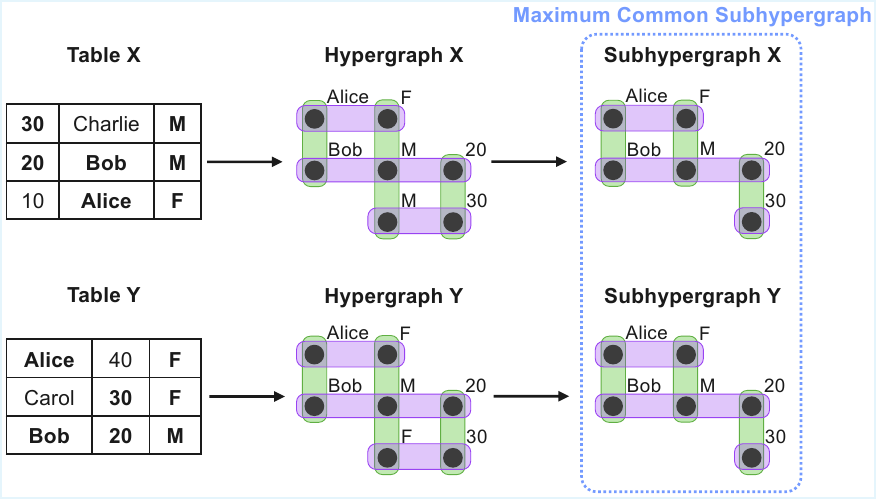}
    \caption{Hypergraph representation of Tables X and Y and their maximum common subhypergraph. Cell values shared by both tables are shown in bold. Hyperedges are visualized as convex hulls (purple for rows, green for columns) enclosing their associated nodes.}\label{fig:hypergraph}
    % \vspace{-1em}
\end{figure}

\subsection{Table-to-Hypergraph Formulation}\label{subsec:prob_reform}

To formulate the \salto{} problem as a maximum common subhypergraph problem, we first define hypergraphs and then describe how to construct them to represent tables in this context.

\subsubsection{Hypergraph}

Edges in a hypergraph can connect multiple nodes at once, allowing for more complex relationships to be represented compared to a standard graph. Hypergraphs can be directed or undirected. Here, we make use only of \emph{undirected} hypergraphs.

\begin{definition}[Hypergraph]
    A hypergraph is defined as $H = (V, E)$, where $V = \{v_1, \dots, v_{|V|}\}$ is a finite set of nodes, and $E = \{e_1, \dots, e_{|E|}\}$ is a finite set of hyperedges, each of which is a non-empty subset of $V$. Formally, $E \subseteq \mathcal{P}(V) \setminus \{\emptyset\}$, where $\mathcal{P}(V)$ denotes the power set of $V$.
\end{definition}

\subsubsection{Table-to-Hypergraph Construction}\label{subsubsec:t2h_construction}

For each table in the given pair, we construct a corresponding hypergraph. The construction procedure is outlined below:

\smallskip\noindent\textbf{Step 1: Node Construction (Cross-Table Dependency)}. We begin by identifying all \emph{common cell values}, i.e., values that appear in \emph{both} tables. For each occurrence of a common cell value in a table, we create a corresponding node $v$ in that table's hypergraph. Thus, if a common cell value appears multiple times in one table, we create one node for each occurrence, all labeled with the same value. This preserves multiplicity within each table and naturally supports many-to-many cell mappings between repeated values across tables. Specifically, each node representing a cell value in one hypergraph can potentially match any node with the same label in the other hypergraph. By restricting node creation to only common cell values, we introduce a cross-table dependency from the outset, directly coupling the two tables and reducing the search space to only those cells that can contribute to a potential overlap.

\begin{egbox}
    \begin{example}[]
        Consider the example in Figure~\ref{fig:hypergraph}. We first identify all common cell values, which are highlighted in bold in Tables X and Y. For each occurrence of a common cell value in a table, we create a corresponding node in that table's hypergraph. For example, the common cell value ``M'' appears twice in Table X, so two distinct nodes labeled ``M'' are created in Hypergraph X. Similarly, ``F'' appears twice in Table Y, resulting in two nodes labeled ``F'' in Hypergraph Y. All other common cell values appear only once in each table, so a single node is created for each in the respective hypergraphs.
    \end{example}
\end{egbox}

\smallskip\noindent\textbf{Step 2: Hyperedge Construction (Intra-Table Dependency)}. For each row in a table, we create a row-hyperedge, containing exactly those nodes corresponding to the common cells in that row. Similarly, for each column, we create a column-hyperedge with the corresponding nodes. This construction captures intra-table dependencies by explicitly modeling the structural relationships among cells within the same row or column. Consequently, each node always belongs to exactly one row-hyperedge and one column-hyperedge, preserving the structural organization of the table.

\begin{egbox}
    \begin{example}[]
        In Figure~\ref{fig:hypergraph}, row-hyperedges (highlighted in purple) connect all nodes from the same row, while column-hyperedges (highlighted in green) connect all nodes from the same column. For example, in the first row of Table X, the common cell values ``30'' and ``M'' correspond to nodes connected by a purple row-hyperedge in Hypergraph X, representing their relationship within that row. Similarly, in the first column of Table X, the common cell values ``30'' and ``20'' are connected by a green column-hyperedge, capturing the column-level structure.
    \end{example}
\end{egbox}

\noindent\textbf{Complexity.} Let $T_1$ and $T_2$ be two input tables of sizes $m_1\cdot n_1$ and $m_2\cdot n_2$, respectively. Constructing the nodes requires scanning both tables once to identify all cell occurrences whose values appear in both tables, taking $\mathcal{O}(m_1n_1 + m_2n_2)$ time. We then group these nodes into row- and column-hyperedges by iterating through rows and columns, which takes $\mathcal{O}(m_1 + n_1 + m_2 + n_2)$ time and creates $\mathcal{O}(m_1 + n_1)$ and $\mathcal{O}(m_2 + n_2)$ hyperedges for the two hypergraphs. Overall, the transformation runs in $\mathcal{O}(m_1n_1 + m_2n_2)$ time, i.e., linear in the total number of table cells. We materialize each hypergraph as a sparse incidence structure over only common-value cell occurrences. Each node stores its row and column membership. Thus, the memory is linear in the number of common-value cell occurrences $|V|$, which is upper-bounded by the total number of cells $m_1n_1+m_2n_2$ (e.g., when every cell in both tables contains the same value). Compared to the factorial search space $\mathcal{O}(m_1! \cdot n_1! \cdot m_2! \cdot n_2!)$ of row and column permutations, this construction step is significantly~more~efficient.

\subsection{A New Problem: Maximum Common Subhypergraph}\label{subsec:mcs}

Having constructed hypergraphs from tables, we now formalize the new problem defined on these structures. Specifically, we introduce the definitions of subhypergraphs, common subhypergraphs, and the maximum common subhypergraph problem on the table-induced hypergraphs.

\subsubsection{Subhypergraph}

An induced \emph{subhypergraph} is obtained by selecting a subset of nodes from the original hypergraph and restricting each hyperedge to its intersection with this subset. In other words, we retain all hyperedges that contain at least one selected node from the subset and trim each to include only the selected nodes it contains. For simplicity, we refer to an induced subhypergraph as a subhypergraph for the rest of this paper.

\begin{definition}[Subhypergraph]
    Given a hypergraph $H = (V, E)$, the subhypergraph induced by a subset of nodes $V' \subseteq V$ is defined as $H[V'] = (V', E')$, where $E' = \{ e \cap V' \mid e \in E, e \cap V' \neq \emptyset \}$.
\end{definition}

\begin{egbox}
    \begin{example}[]
        Consider the example of Hypergraph X and Subhypergraph X in Figure~\ref{fig:hypergraph}. To form Subhypergraph X, we select all nodes from Hypergraph X except the bottom node labeled ``M''. Hyperedges that previously connected the excluded node are now trimmed to connect only the remaining nodes, i.e., one row-hyperedge is reduced to \{``30''\} and one column-hyperedge becomes \{``F'', ``M''\}. The remaining hyperedges remain unchanged. 
    \end{example}
\end{egbox}

\subsubsection{Common Subhypergraph}\label{subsubsec:common_subgraph}

A \emph{common subhypergraph} is a hypergraph that is isomorphic to subhypergraphs of each of the two hypergraphs. Specifically, there exist bijections between their nodes and hyperedges that preserve incidence. We refer to these corresponding elements as matched nodes and matched hyperedges.

\begin{definition}[Common Subhypergraph]\label{def:cs}
    Let $H_1 = (V_1, E_1)$ and $H_2 = (V_2, E_2)$ be two hypergraphs. A hypergraph $C = (V_C, E_C)$ is a common subhypergraph of $H_1$ and $H_2$ if there exist subsets $V_1' \subseteq V_1$ and $V_2' \subseteq V_2$ such that $C$ is isomorphic to both subhypergraphs $H_1[V_1'] = (V_1', E_1')$ and $H_2[V_2'] = (V_2', E_2')$. That is, there exist bijections $\phi_V : V_C \to V_i'$ and $\phi_E : E_C \to E_i'$ (for $i = 1,2$) such that for all $e \in E_C$ and $v \in V_C$, we have $v \in e$ if and only if $\phi_V(v) \in \phi_E(e)$.
\end{definition}

\noindent\textbf{Key Constraints.} In our setting, a valid common subhypergraph must satisfy two additional constraints: (1)~matched nodes between the two hypergraphs must share the same label, reflecting overlapping cell values in the tables; (2)~each matched node must belong to exactly one row-hyperedge and one column-hyperedge, following the construction in Step~2 (Section~\ref{subsubsec:t2h_construction}), thereby preserving the structural properties of the tables.

\begin{egbox}
    \begin{example}[]
        Consider the example in Figure~\ref{fig:hypergraph}. Both Subhypergraphs X and Y form a common subhypergraph (they also constitute a maximum one, as we explain next). The nodes correspond one-to-one between both subhypergraphs with matching labels, and their row- and column-hyperedges are preserved consistently, e.g., the hyperedge with nodes \{``Alice'', ``F''\} in Subhypergraph X matches its counterpart in Subhypergraph Y. Additionally, both subhypergraphs satisfy the two key constraints required for a valid common subhypergraph.
    \end{example}
\end{egbox}

\subsubsection{Maximum Common Subhypergraph}

The \emph{maximum common subhypergraph} is a common subhypergraph with the maximum number of nodes, along with their corresponding hyperedges.

\begin{definition}[Maximum Common Subhypergraph]
    Let $H_1$ and $H_2$ be two hypergraphs. A hypergraph $C^* = (V^*_C, E^*_C)$ is a \emph{maximum common subhypergraph} of $H_1$ and $H_2$ if it is a common subhypergraph of both, and for any common subhypergraph $C = (V_C, E_C)$, it holds that $|V_C| \leq |V^*_C|$.
\end{definition}

\subsection{Establishing Problem Equivalence}\label{subsec:proofs}

We now establish the formal equivalence between the two formulations. Specifically, we show that solving the maximum common subhypergraph problem on hypergraphs constructed from tables yields valid and complete solutions to the \salto{} problem. To prove this equivalence, we first demonstrate that: (i)~the hypergraph construction is invariant under row and column permutations; (ii)~the solution to the maximum common subhypergraph problem corresponds exactly to the overlap in the \salto{} problem.

\smallskip\noindent\textbf{Row and Column Permutation Invariance in Hypergraph Construction.}
Permuting the rows or columns of a table results in hypergraphs that are isomorphic. This invariance ensures that the hypergraph structure remains unchanged under row or column permutations, preserving structural properties including node-hypergraph incidence and hyperedge cardinalities.

\begin{theorem}[Permutation Invariance]\label{theorem:perm_inv}
    Let $T$ be a table and $H(T)$ be its associated hypergraph. For any table $T'$ obtained by permuting the rows and/or columns of $T$, the hypergraphs $H(T)$ and $H(T')$ are isomorphic. Consequently, structural properties of the hypergraph representation are preserved under such permutations.
\end{theorem}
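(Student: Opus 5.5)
We construct an explicit isomorphism from the permutations themselves. Let $T$ have cells $c_{i,j}$ with $1 \le i \le m$, $1 \le j \le n$. Let $T'$ be obtained by a row permutation $\sigma$ and a column permutation $\tau$, so that $c'_{\sigma(i),\tau(j)} = c_{i,j}$. A pure row or pure column permutation is the special case where $\tau$ or $\sigma$ is the identity.

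\textbf{Step 1: nodes.} Node construction (Step~1 of Section~\ref{subsubsec:t2h_construction}) depends only on which values are common to both tables. Permuting $T$ does not change the multiset of its cell values, so the set of common values is the same for $T$ and $T'$. Hence a cell $(i,j)$ of $T$ yields a node $v_{i,j}$ if and only if cell $(\sigma(i),\tau(j))$ of $T'$ yields a node $v'_{\sigma(i),\tau(j)}$. Define
\[
\phi_V(v_{i,j}) = v'_{\sigma(i),\tau(j)} .
\]
This is a bijection $V \to V'$ because $(\sigma,\tau)$ is a bijection on index pairs. It also preserves labels, since $c'_{\sigma(i),\tau(j)} = c_{i,j}$. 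Repeated values need no special care: each node is an occurrence tied to its position, and positions are mapped bijectively.

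\textbf{Step 2: hyperedges.} Let $r_i$ and $k_j$ denote the row- and column-hyperedges of $H(T)$, and $r'_i$, $k'_j$ those of $H(T')$. Define
\[
\phi_E(r_i) = r'_{\sigma(i)}, \qquad \phi_E(k_j) = k'_{\tau(j)} .
\]
Rows map to rows and columns map to columns, so the row/column typing used by the key constraints is also preserved.

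\textbf{Step 3: incidence.} We check directly that
\[
v_{a,b} \in r_i \iff a = i \iff \sigma(a) = \sigma(i) \iff \phi_V(v_{a,b}) \in r'_{\sigma(i)} ,
\]
where the middle step uses injectivity of $\sigma$. The column case is identical with $\tau$. Therefore $\phi_E(e) = \{\phi_V(v) : v \in e\}$ for every hyperedge $e$, which gives the incidence condition of Definition~\ref{def:cs}.

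\textbf{Step 4: the main obstacle, empty rows and columns.} A row or column may contain no common values. The Hypergraph definition forbids empty hyperedges, so such rows and columns should be discarded. The convention must be applied consistently. Because $r_i = \emptyset$ if and only if $r'_{\sigma(i)} = \emptyset$, the restriction of $\phi_E$ to nonempty hyperedges is still a bijection.

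A second, related subtlety arises if $E$ is read literally as a set of vertex subsets. Then a row-hyperedge and a column-hyperedge with the same single node would coincide. I would treat $E$ as a set of labelled hyperedges (tagged by their row or column index), which is how Step~2 builds them. The map $\phi_E$ commutes with this tagging, so the argument is unaffected.

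\textbf{Step 5: the consequence.} Invariants such as degrees, hyperedge cardinalities, label multisets and incidence structure are preserved by any isomorphism. The "consequently" clause therefore follows immediately.
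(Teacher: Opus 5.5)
Your proof is correct and follows the same route as the paper: you build an explicit incidence-preserving pair $(\phi_V,\phi_E)$, with $\phi_E$ sending the $i$-th row-hyperedge and $j$-th column-hyperedge to those indexed by $\sigma(i)$ and $\tau(j)$, and then verify incidence directly. Your node map $v_{i,j}\mapsto v'_{\sigma(i),\tau(j)}$ is the consistent form of the paper's ``identity'' map, which works only if nodes are read as cells that move under the permutation; your handling of common-value-only nodes and of empty or coinciding hyperedges fills in details the paper's proof skips.
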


\begin{proof}
    \emph{Proof by construction.} Let $T$ be a table with rows indexed by $I$ and columns by $J$. Let $T'$ be the table obtained from $T$ by applying a row permutation $\sigma$ to the rows and a column permutation $\tau$ to the columns, that is, $T'(\sigma(i),\tau(j)) = T(i,j)$ for all $(i,j)\in I\times J$.
    
    \smallskip\noindent\uline{Node Set.} We construct the hypergraph $H(T)$ by creating one node $v_{i,j}$ for each cell $(i,j)$ of $T$.
    Since row and column permutations do not change the multiset of cell values, the node sets of $H(T)$ and $H(T')$ coincide, that is, $V = V' = \{v_{i,j} \mid (i,j) \in I \times J\}$.
    
    \smallskip\noindent\uline{Hyperedge Set.} In $H(T)$, we define $i$-th row-hyperedge $e_i = \{v_{i,j}\mid j\in J\} \text{ for each } i\in I$, and $j$-th column-hyperedge $f_j = \{v_{i,j}\mid i\in I\} \text{ for each } j\in J$. Similarly in $H(T')$, we define $e'_{\sigma(i)} = \{v_{\sigma(i),\tau(j)} \mid j \in J\}$ and $f'_{\tau(j)} = \{v_{\sigma(i),\tau(j)} \mid i \in I\}$.
    
    \smallskip\noindent\uline{Isomorphism.} Let $\phi_V:V\to V'$ be the identity map on the node set, since $\phi_V(v) = v$. We define the hyperedge bijection $\phi_E$ by $\phi_E(e_i) = e'_{\sigma(i)}$ and $\phi_E(f_j) = f'_{\tau(j)}$. For every node $v$ and hyperedge $e \in \{e_i,f_j\}$, we have $v \in e \iff \phi_V(v) \in \phi_E(e)$, hence $(\phi_V,\phi_E)$ is an incidence-preserving pair of bijections, i.e., an isomorphism. Therefore, $H(T)$ and $H(T')$ are isomorphic hypergraphs.
    
    Since hypergraph isomorphism preserves all structural properties including node-hyperedge incidence and hyperedge sizes, the proof is complete.
\end{proof}

\smallskip\noindent\textbf{Maximum Common Subhypergraph as \salto{}.}
The maximum common subhypergraph between two table-induced hypergraphs corresponds exactly to the largest table overlap in \salto{}, establishing the formal equivalence between the two formulations.

\begin{theorem}[Overlap Equivalence]
    Let $T_1$ and $T_2$ be two tables and $H(T_1)$ and $H(T_2)$ be their corresponding hypergraphs. The maximum common subhypergraph of $H(T_1)$ and $H(T_2)$ is equivalent to the \salto{} between $T_1$ and $T_2$.
\end{theorem}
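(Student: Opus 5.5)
We reduce the statement to an equality of optimal values, $\max_{\pi\in\mathcal{S}}|O_\pi| = |V^*_C|$, together with explicit maps turning an optimal solution of one problem into an optimal solution of the other. Throughout, "common subhypergraph" carries the two Key Constraints of Section~\ref{subsubsec:common_subgraph}: matched nodes have equal labels, and each matched node lies in exactly one matched row-hyperedge and one matched column-hyperedge. We also require that the isomorphism sends row-hyperedges to row-hyperedges and column-hyperedges to column-hyperedges. The proof then has two inequalities, and Theorem~\ref{theorem:perm_inv} lets us assume without loss of generality that $T_1$ is in its original order whenever convenient.

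\emph{Step 1 (overlap $\Rightarrow$ common subhypergraph).} Fix $\pi=(\sigma_1,\tau_1,\sigma_2,\tau_2)$. For each $(i,j)\in O_\pi$, the cells $c^{(1)}_{\sigma_1(i),\tau_1(j)}$ and $c^{(2)}_{\sigma_2(i),\tau_2(j)}$ have equal value, so that value is common. Hence each such cell has a node in its hypergraph; call them $u_{ij}\in V_1$ and $w_{ij}\in V_2$.

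Let $V_1'=\{u_{ij}\}$ and $V_2'=\{w_{ij}\}$, and define $\phi(u_{ij})=w_{ij}$. This map is a bijection because $\sigma_k,\tau_k$ are bijections, so distinct positions $(i,j)$ give distinct cells. Match the row-hyperedge $e^{(1)}_{\sigma_1(i)}\cap V_1'$ to $e^{(2)}_{\sigma_2(i)}\cap V_2'$, and the column-hyperedge $f^{(1)}_{\tau_1(j)}\cap V_1'$ to $f^{(2)}_{\tau_2(j)}\cap V_2'$.

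Incidence is preserved because $u_{ij}$ lies in row $\sigma_1(i)$ if and only if $w_{ij}$ lies in row $\sigma_2(i)$, and likewise for columns. Labels agree by construction. This yields a common subhypergraph with $|O_\pi|$ nodes, so $\max|O_\pi|\le|V^*_C|$.

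\emph{Step 2 (common subhypergraph $\Rightarrow$ overlap).} Take a common subhypergraph given by a label-preserving bijection $\phi:V_1'\to V_2'$ with a type-preserving hyperedge bijection. Each nonempty restricted row-hyperedge of $H_1$ corresponds to a unique row of $T_1$, and it is matched to a restricted row-hyperedge of $H_2$, hence to a row of $T_2$. This gives a partial injection $\rho$ from rows of $T_1$ to rows of $T_2$, defined on the rows touched by $V_1'$. Similarly, the column matching gives a partial injection $\gamma$.

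Incidence preservation ensures that if $u\in V_1'$ sits at cell $(r,s)$, then $\phi(u)$ sits at $(\rho(r),\gamma(s))$. The main obstacle is extending $\rho$ and $\gamma$ to permutations $\pi$ that align these pairs at common positions $1\le i\le\min(m_1,m_2)$ and $1\le j\le\min(n_1,n_2)$. We handle it as follows:
\begin{itemize}
\item The domain of $\rho$ has size at most $\min(m_1,m_2)$, since $\rho$ is injective into $T_2$'s rows.
\item Enumerate the matched row pairs as positions $1,\dots,k$, setting $\sigma_1(i)=r_i$ and $\sigma_2(i)=\rho(r_i)$.
\item Fill the remaining positions with the unmatched rows of each table in arbitrary order.
\item Apply the same procedure to columns.
\end{itemize}
Each matched node pair then lands at the same position with equal values, so $|O_\pi|\ge|V_C|$. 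Extra coincidental matches only increase the count.

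\emph{Step 3 (conclusion).} Combining Steps 1 and 2 gives equal optima. The constructions map a maximum common subhypergraph to a permutation tuple $\pi$ whose overlap attains the maximum size. When Step 2 is applied to a maximum common subhypergraph, no extra coincidental matches are possible, so $O_{\pi}$ contains exactly the matched cells. Conversely, the construction of Step 1 maps $O_{\pi^*}$ to a maximum common subhypergraph, which establishes the claimed equivalence.
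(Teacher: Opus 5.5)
Your proof is correct and has the same two-direction skeleton as the paper's: an overlap yields a common subhypergraph of equal size, and a common subhypergraph yields an overlap at least as large, so the optima coincide. The second direction, however, is carried out differently. The paper justifies aligning the matched cells by citing Theorem~\ref{theorem:perm_inv}. That theorem only says that permuting a single table leaves its hypergraph unchanged up to isomorphism; it does not produce permutations that align two different tables. You instead read off partial injections $\rho,\gamma$ from a hyperedge bijection sending rows to rows and columns to columns. You then extend them to permutations that place the $k\le\min(m_1,m_2)$ matched rows (and likewise columns) first, so the index window in the definition of $O_\pi$ is respected. This makes the argument self-contained; your own mention of Theorem~\ref{theorem:perm_inv} is never used and can be dropped.

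Your explicit type-preservation requirement is essential, not cosmetic. Take the $2\times 2$ table $T$ with rows $(a,b),(c,d)$ and its transpose $T^{\top}$. Both hypergraphs have hyperedges $\{a,b\},\{c,d\},\{a,c\},\{b,d\}$, so the label-preserving identity map satisfies the paper's Key Constraints as literally stated. It gives a common subhypergraph with 4 nodes, while the largest overlap is 2; the paper's proof uses the restriction only tacitly.

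Finally, framing the argument as two inequalities gives you two further benefits. It shows that the overlap built from a maximum common subhypergraph consists exactly of the matched cells. It also avoids the paper's Direction~1 step, which infers maximality of $C$ from maximality of $O_{\pi^*}$ before the reverse inequality is available.
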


\begin{proof}
    \emph{Proof by construction}. 

    \smallskip\noindent\uline{Direction 1: From Table Overlap to Common Subhypergraph.} Recall from Definition~\ref{def:lolap} that $O_{\pi^*}$ is the \salto{} between $T_1$ and $T_2$ achieved under optimal row and column permutations. Each cell $c$ in a table corresponds to a node $v$ in the associated hypergraph. Therefore, the set of overlapping cells defines a set of nodes $V_C \subseteq V_1 \cap V_2$, where each node $v \in V_C$ corresponds to a matching cell in the overlap. The hypergraph construction includes one hyperedge per row and column, connecting all cells in the respective row or column. Hence, the rows and columns in the overlapping cells induce structurally identical row- and column-hyperedges in both $H(T'_1)$ and $H(T'_2)$. Let $C$ be the subhypergraph induced by the nodes in $V_C$, along with their corresponding row- and column-hyperedges. Since the overlap $O_{\pi^*}$ is maximal by construction, $C$ is the maximum common subhypergraph of $H(T'_1)$ and $H(T'_2)$.
    
    \smallskip\noindent\uline{Direction 2: From Common Subhypergraph to Table Overlap.} Conversely, let $C^*$ be the maximum common subhypergraph of $H(T_1)$ and $H(T_2)$ with node set $V_C$. By the construction of the hypergraphs, each node $v \in V_C$ corresponds to matching cells in $T_1$ and $T_2$. The hyperedges of $C^*$ correspond to the rows and columns containing these overlapping cells. By Theorem~\ref{theorem:perm_inv}, the hypergraph structure is invariant under row and column permutations. Therefore, we can permute the rows and columns of $T_1$ and $T_2$ such that the corresponding cells associated with $V_C$ are aligned in the same positions in both tables. This alignment yields a table overlap of size $|V_C|$. Suppose, for contradiction, that there exists an overlap larger than $|V_C|$. Then, by the construction in Direction 1, this larger overlap would induce a common subhypergraph with more nodes than $C^*$, contradicting its maximality. Hence, the overlap induced by $C^*$ is the largest possible.
\end{proof}
\section{Hardness Analysis}\label{subsec:hardness}

We show that the maximum common subhypergraph problem is NP-hard and NP-hard to approximate. By the established equivalence between the maximum common subhypergraph and \salto{} problems, it follows that \salto{} inherits the same computational hardness.

\begin{theorem}\label{theorem:nphard}
    Finding the maximum common subhypergraph problem is NP-hard.
\end{theorem}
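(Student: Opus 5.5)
We prove NP-hardness by a polynomial-time reduction from a known NP-hard problem. The maximum common subhypergraph problem, as defined in the previous section, contains the classical maximum common (induced) subgraph problem as a special case: an ordinary graph is a hypergraph all of whose hyperedges have cardinality two. The cleanest route, however, is to reduce directly to the table-induced version, since that is the version \salto{} inherits and we want the hardness to hold there too. Our plan is therefore a reduction from \textsc{Maximum Clique}, or equivalently \textsc{Maximum Balanced Biclique}, to \salto{}. We then invoke the overlap-equivalence theorem to transfer hardness to the maximum common subhypergraph on table-induced hypergraphs.

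\textbf{Reduction.} Given a bipartite graph $G=(A\cup B,E)$, build a table $T_1$ with $|A|$ rows and $|B|$ columns, whose cell $(a,b)$ holds the value $1$ if $ab\in E$ and a fresh unique symbol otherwise. Take $T_2$ to be a $k\times k$ table filled with $1$s. Any overlap under a permutation tuple $\pi$ then has the following form. One chooses $k$ rows and $k$ columns of $T_1$ placed in the first $\min$ positions, and the matched cells are exactly the edges of $G$ between the chosen row set and column set. Hence an overlap of size $k^2$ exists if and only if $G$ contains a $K_{k,k}$ biclique. Deciding whether a balanced biclique of size $k$ exists is NP-complete (Garey and Johnson, GT24). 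The construction is clearly polynomial: the tables have $O(|A||B|+k^2)$ cells, and the hypergraph construction of Section~\ref{subsubsec:t2h_construction} is linear in that size. By the overlap-equivalence theorem, a polynomial algorithm for the maximum common subhypergraph would decide the biclique problem, which proves NP-hardness.

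\textbf{Main obstacle.} The delicate step is making sure the ``only if'' direction really forces a full $k\times k$ block. We need $|O_\pi|=k^2$ to imply that every one of the $k^2$ aligned positions matches. This holds because $T_2$ has exactly $k^2$ cells, all equal to $1$, so overlap size $k^2$ means every aligned $T_1$ cell equals $1$. We must also handle the degenerate case in which $T_1$ has fewer than $k$ rows or columns; there the answer is trivially no. Finally, we should note that the two key constraints (label equality and one row-hyperedge plus one column-hyperedge per node) are automatically satisfied by this construction.

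For the approximation claim previewed in the section intro, we would instead lean on the known inapproximability of maximum edge biclique or of maximum common induced subgraph. That requires a gap-preserving version of the same construction, and it is the part we expect to need the most care.
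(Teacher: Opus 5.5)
Your argument is correct, but it takes a genuinely different route from the paper. The paper reduces \textsc{Maximum Clique} directly to the general problem, taking $H_1=G$ and $H_2$ the complete graph on $V$, both read as $2$-uniform hypergraphs. You instead reduce \textsc{Balanced Biclique} to \salto{} and compose this with the overlap-equivalence theorem, which reduces \salto{} to the maximum common subhypergraph problem. The paper's route is shorter, and the paper reuses it for the $|V|^{1-\varepsilon}$ inapproximability result. Your route does not give that bound directly: the known hardness-of-approximation results for balanced biclique rest on assumptions stronger than $\mathrm{P}\neq\mathrm{NP}$, as you anticipate. In exchange, your route proves something the paper's does not. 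The clique instances are not table-induced, since a node lies in many hyperedges rather than in one row and one column, so they say nothing about \salto{}. The paper's remark that \salto{} ``inherits'' the hardness therefore uses the equivalence in the wrong direction. Your instances are table-induced, so they show that \salto{} itself is NP-hard. One step should be tightened. The equivalence theorem is only sound for the constrained version (equal labels, rows mapped to rows, columns to columns); without the typing, a table with distinct entries and its transpose already break it. The theorem, judging by its proof, concerns the general unlabeled definition. For your instance you can check this directly. Every common value is $1$, so labels are vacuous. A common subhypergraph with $k^2$ nodes must use all of $H(T_2)$, whose $2k$ hyperedges all have size $k$. The matched nodes of $H(T_1)$ are therefore split into exactly $k$ trimmed rows and $k$ trimmed columns of size $k$, so they fill an all-$1$ block $R\times C$ with $|R|=|C|=k$. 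This check is also immune to the paper's convention that induced subhypergraphs keep trimmed, possibly singleton, hyperedges. Under that convention the clique reduction as literally written fails: a triangle with a pendant vertex, compared with $K_4$, has no $3$-node common subhypergraph. The same convention means your opening aside, that the problem contains maximum common induced subgraph verbatim, is not exact, though you never rely on it.
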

\vspace{-1em}

\begin{proof}
    We prove the theorem via a reduction from the maximum clique problem, which is NP‐hard~\cite{karp2009reducibility}.

    \begin{definition}[Maximum Clique Problem~\cite{karp2009reducibility}]\label{def:maxclique}
        Let $G = (V, E_G)$ be an undirected graph, where $V$ is a set of nodes and $E_G$ is a set of edges. A clique is a subset $K \subseteq V$ such that every pair of distinct nodes in $K$ is connected by an edge, i.e., $\{u, v\} \in E_G$ for all $u, v \in K$ with $u \neq v$. The goal is to find a clique $K \subseteq V$ of maximum size, i.e., one that maximizes $|K|$ in $G$.
    \end{definition}
    
    Let $G = (V, E_G)$ be an instance of the maximum clique problem. We construct two hypergraphs $H_1$ and $H_2$ as follows:
    \begin{enumerate}[leftmargin=*]
        \item $H_1 = (V, E_1)$, where $E_1 = \{\{u, v\} \mid (u, v) \in E_G\}$. In this case, $H_1$ is simply the original graph $G$ viewed as a hypergraph, where each hyperedge consists of exactly two nodes, corresponding to the edges of the original graph. 
        \item $H_2 = (V, E_2)$, where $E_2 = \{\{u, v\} \mid u, v \in V, u \neq v\}$. Here, $H_2$ represents the complete graph on the node set $V$, which is also viewed as a hypergraph with every pair of distinct nodes forming a hyperedge.
        % Recall that a complete graph is a graph in which every pair of distinct nodes is connected by an edge.
    \end{enumerate}
    
    Consider any subset $V_C \subseteq V$. By Definition~\ref{def:cs}, a common subhypergraph $C = (V_C, E_C)$ of $H_1$ and $H_2$ is an induced subhypergraph on $V_C$, where $E_C$ contains all hyperedges in both $H_1$ and $H_2$ when restricted to $V_C$. Since $H_2$ is the complete graph, any pair of distinct nodes in $V_C$ is present in $E_2$. Thus, $E_C$ consists of those hyperedges in $E_1$ that are induced by $V_C$. For $E_C$ to contain all possible pairs in $V_C$, every such pair must also exist in $E_1$, i.e., $V_C$ must form a clique in $G$. Thus, finding a maximum common subhypergraph of $H_1$ and $H_2$ is equivalent to finding a maximum clique in $G$. Since the construction of $H_1$ and $H_2$ from $G$ is achievable in polynomial time, if there is a polynomial-time algorithm that finds the optimal solution $C^*$ to our problem, then we can solve the maximum clique problem optimally. This is not possible unless $\text{P} = \text{NP}$. Therefore, the maximum common subhypergraph problem is NP‐hard.
    % \huang{you may want to say, if there is a polynomial time optimal algorithm to our problem, we can solve maximum clique optimally. this is not true, unless ....}
\end{proof}

\begin{theorem}
    The maximum common subhypergraph problem is NP-hard to approximate within a factor of $\mathcal{O}(|\mathcal{V}|^{1 - \varepsilon})$ for any $\varepsilon > 0$ in polynomial time.
\end{theorem}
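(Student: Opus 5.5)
The plan is to reuse the reduction from the proof of Theorem~\ref{theorem:nphard} and show that it is \emph{approximation-preserving}. We then invoke the known inapproximability of maximum clique: for every $\varepsilon>0$, it is NP-hard to approximate within $|V|^{1-\varepsilon}$ (Håstad; derandomized by Zuckerman). Given a clique instance $G=(V,E_G)$, we build $H_1=(V,E_1)$ from $G$ and $H_2=(V,E_2)$ as the complete graph on $V$, exactly as before. The node set of the instance then satisfies $|\mathcal{V}|=|V|$, or at most $2|V|$ if $|\mathcal{V}|$ counts the nodes of both hypergraphs. That factor of two is absorbed by slightly shrinking $\varepsilon$, since $(2n)^{1-\varepsilon}\le n^{1-\varepsilon/2}$ for large $n$.

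The key step is to establish a two-way, size-preserving correspondence between solutions. First, every clique $K$ of $G$ yields a common subhypergraph with $|K|$ nodes. Both $H_1[K]$ and $H_2[K]$ are complete graphs on $K$, so the identity gives the required isomorphism. Second, we need the converse: any common subhypergraph with $k$ nodes yields, in polynomial time, a clique of size $k$ in $G$. Since $H_2[V_2']$ is complete, the isomorphism forces $H_1[V_1']$ to contain every pair of $V_1'$ as a hyperedge, and therefore $V_1'$ is a clique in $G$. The extraction is trivial: we simply output $V_1'$. Together these give $\mathrm{OPT}_{\mathrm{MCS}}=\omega(G)$, and a solution of value $k$ maps back to a clique of value $k$. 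This makes the reduction strict, or L-reduction-like with constants $1$.

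Now suppose a polynomial-time algorithm approximates maximum common subhypergraph within $|\mathcal{V}|^{1-\varepsilon}$. Running it on $(H_1,H_2)$ and mapping the result back gives a clique of size at least $\omega(G)/|V|^{1-\varepsilon}$. That contradicts the hardness of clique unless $\mathrm{P}=\mathrm{NP}$.

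The main obstacle is the converse direction. We must argue carefully that an isomorphism between the \emph{induced} subhypergraphs $H_1[V_1']$ and $H_2[V_2']$ really forces $V_1'$ to be a clique. Restriction under Definition~\ref{def:cs} can create degenerate singleton hyperedges $e\cap V'$, and these could in principle be matched in ways that inflate a non-clique. I would first handle this by counting, for $|V'|\ge 2$. In $H_2[V_2']$ every node lies in $|V_2'|-1$ two-element hyperedges, and restricting a complete graph to a subset never produces singletons other than trivial ones. Incidence preservation then forces every node of $V_1'$ to have degree $|V_1'|-1$ in $H_1[V_1']$, which means $V_1'$ is a clique. If the singleton issue proves awkward, a fallback is to modify $H_2$ and $H_1$ to discard hyperedges of size $1$ after restriction, or to attach a unique label to each node so that $\phi_V$ must be the identity. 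That extra rigidity keeps the correspondence exact without changing the approximation ratio.
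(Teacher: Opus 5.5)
Your plan follows the paper's proof: the same clique reduction ($H_1=G$, $H_2$ the complete graph on $V$), the claim that it is approximation-preserving, and clique inapproximability. You are more careful than the paper about $|\mathcal{V}|$ versus $|V|$ and about Zuckerman's derandomization, and your converse direction is correct. However, your forward direction (every clique gives a common subhypergraph of the same size) fails under the paper's definition of induced subhypergraph. That definition keeps every nonempty trace $e\cap V'$, singletons included; the paper's own example trims a row-hyperedge to a single node.

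For any proper subset $K$ of $V$, $H_2[K]$ is not the complete graph on $K$. Each $u\in K$ also carries the hyperedge $\{u\}=\{u,x\}\cap K$ for any $x\notin K$. By contrast, $H_1[K]$ contains $\{u\}$ only if $u$ has a $G$-neighbour outside $K$. So the identity is not an isomorphism whenever some vertex of the clique has no outside neighbour, and $\mathrm{OPT}_{\mathrm{MCS}}=\omega(G)$ is false in general. For example, let $G$ be a triangle plus a disjoint edge. Then $H_1$ restricted to the triangle has $3$ hyperedges, while every $3$-node restriction of $H_2$ has $6$. By your converse the triangle is the only $3$-node candidate, so the optimum is $2<3=\omega(G)$.

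You placed the singleton danger in the converse, where it cannot arise. Incidence preservation forces $\phi_E(e)=\phi_V(e)$, so hyperedge sizes, and hence the number of $2$-element hyperedges at each node, are invariant. Neither fallback repairs the forward direction. Discarding singletons changes the problem whose hardness you are proving. Forcing $\phi_V$ to be the identity leaves the mismatch between $H_1[K]$ and $H_2[K]$ untouched.

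The fix is one line, and the paper's proof needs it too. The paper inherits the same exact-equivalence claim from the proof of Theorem~\ref{theorem:nphard} and never checks this direction; only ``common subhypergraph $\Rightarrow$ clique'' holds verbatim there. If $G$ is complete, the two hypergraphs coincide. Otherwise, take a maximum clique $K$ with $|K|\ge 2$, delete one vertex $w$, and set $K'=K\setminus\{w\}$. Every $u\in K'$ is adjacent to $w\notin K'$. Hence $H_1[K']$ and $H_2[K']$ are the same set system: all pairs of $K'$ together with all singletons. Combined with your converse, this gives $\omega(G)-1\le\mathrm{OPT}_{\mathrm{MCS}}\le\omega(G)$. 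Losing one node costs at most a factor $2$ when $\omega(G)\ge 2$; the case $\omega(G)=1$ is trivial. That factor is absorbed by shrinking $\varepsilon$, just as you did for the factor $2$ in $|\mathcal{V}|$.
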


\begin{proof}
    The reduction in the proof for Theorem~\ref{theorem:nphard} is approximation-preserving. Specifically, any common subhypergraph $C$ of size $|K|$ corresponds directly to a clique of size $|K|$ in $G$. Therefore, an algorithm that approximates the maximum common subhypergraph problem within a factor better than $|V|^{1 - \varepsilon}$ would yield an approximation algorithm for the maximum clique problem with the same guarantee. However, it is known that the maximum clique problem cannot be approximated within a factor of $|V|^{1 - \varepsilon}$ for any $\varepsilon > 0$ unless $\text{P} = \text{NP}$~\cite{hastad1996clique}. This contradiction implies that no such approximation algorithm exists for the maximum common subhypergraph problem.
\end{proof}

\section{\hyper{}: Solving Maximum Common Subhypergraph}\label{sec:our_solution}

This section presents our proposed algorithm, \hyper{}, for solving the maximum common subhypergraph problem induced by our table-to-hypergraph reformulation. We first introduce the background and limitations of the existing graph-based method \mcsplit{}~\cite{mccreesh2017partitioning} (Section~\ref{subsec:back_mcp}), which inspires our method. We then present the core steps of \hyper{}, each of which addresses a specific challenge in adapting graph-based matching to the hypergraph setting (Sections~\ref{subsec:hyper_overview}--\ref{subsec:adaptations}). Finally, we introduce tolerance-based pruning, an optimization that accelerates the search process with approximate solutions (Section~\ref{subsec:tolerance}).

\subsection{Background: \mcsplit{} and Its Limitations}\label{subsec:back_mcp}

\mcsplit{}~\cite{mccreesh2017partitioning} is a well-established branch-and-bound algorithm for finding the largest common subgraph between two graphs. It operates by maintaining a set of valid node pairings, grouped into label classes, and recursively explores these pairings using backtracking. At each step, \mcsplit{}: (1)~selects a label class and a node pair to match; (2)~updates the label classes based on adjacency constraints; (3)~computes an upper bound to prune unpromising branches.

However, \mcsplit{} assumes standard graphs with binary edges and adjacency-based compatibility. Our problem involves table-induced hypergraphs that differ fundamentally from standard graphs in three key ways: (i)~\emph{Hyperedges connect multiple nodes simultaneously.} Unlike binary edges, each hyperedge corresponds to an entire table row or column, connecting many nodes; (i)~\emph{Each node belongs to two distinct row- and column-hyperedges.} Valid mappings must preserve this dual membership, ensuring structural consistency across both dimensions; (iii)~\emph{Node mappings must preserve both cell values and hyperedge incidence.} Mapped nodes must share the same cell values and come from matched row-to-row and column-to-column hyperedges. These differences reveal that \mcsplit{}’s graph-based assumptions do not capture the complexities of hypergraph structures, motivating the development of \hyper{} to effectively handle these non-trivial constraints.

\subsection{Overview of \hyper{}}\label{subsec:hyper_overview}

% \huang{you need to describe what is mcsplit first such that people can know why we can get inspiration from it.}
We propose \hyper{}, a recursive branch-and-bound algorithm to optimally solve the maximum common subhypergraph problem between the two table-induced hypergraphs $H_1$ and $H_2$ from our construction. While inspired by the partition-based \mcsplit{} framework~\cite{mccreesh2017partitioning}, \hyper{} introduces non-trivial adaptations to handle the dual row/column hyperedges incidence and the multiplicities of common cell values that are induced by our construction. These adaptations are essential, as the off-the-shelf \mcsplit{} cannot enforce the structural constraints posed by our hypergraph formulation and thus fails to solve the problem. Accordingly, \hyper{} is specialized to these table-induced hypergraphs and does not aim to be a general-purpose solver for arbitrary hypergraphs. We first present the algorithmic pipeline, followed by highlighting the non-trivial adaptations specifically designed to address these constraints.

The pseudocode for \hyper{} is shown in Algorithm~\ref{algo:hypermcsplit}. The algorithm incrementally builds a node-to-node mapping $M \subseteq V_1 \times V_2$ between two hypergraphs $H_1 = (V_1, E_1)$ and $H_2 = (V_2, E_2)$ using depth-first search with branch-and-bound pruning. It starts from an empty mapping set $M$ (line~\ref{algo_hmcsp:3}) and explores candidates one pair at a time. To manage candidates efficiently, unmapped nodes are grouped into \emph{hypergraph-aware label classes} (line~\ref{algo_hmcsp:4}) based on their cell values and hyperedge incidences. At each search state, it \emph{selects a candidate node pair} $(v_1$, $v_2)$, checks whether it satisfies hypergraph-specific compatibility conditions, and extends the mapping accordingly (line~\ref{algo_hmcsp:12}-\ref{algo_hmcsp:14}). If the current mapping can potentially lead to a larger $M$ than the best solution found so far $|M^*|$, the search continues recursively. Otherwise, it backtracks using an upper bound estimate (line~\ref{algo_hmcsp:9}-\ref{algo_hmcsp:10}). The best solution is updated throughout the process and returned upon termination (line~\ref{algo_hmcsp:8}). The algorithm terminates once all branches have been explored. Although we focus on the maximum solution, the formulation represents every feasible overlap as a common subhypergraph, and the search traverses many such candidates. The same framework can be adapted for enumeration variants, such as returning the top-$k$ overlaps by size or those above a size threshold, which correspond to possibly different valid row and column rearrangements and overlap shapes.
% \huang{it is unclear to know the logic below. what is the relationship between the following graphs and the overview above? also, there is no roadmap for this whole section. people do not know the writing logic.}

\begin{algorithm}[t]
    \caption{\hyper{}}\label{algo:hypermcsplit}
    \SetKwInOut{Input}{Input}
    \SetKwInOut{Output}{Output}
    \SetKwFunction{FMain}{Main}
    \SetKwFunction{FSearch}{Search}
    \SetKwProg{Fn}{}{:}{}
    \small
    \Input{Two hypergraphs $H_1$ and $H_2$.}
    \Output{$M^*$.}
    $M^* \leftarrow \emptyset$; \commt{best solution}\\
    \Fn{\FMain{$G_1, G_2$}} {
        $M \leftarrow \emptyset$; \commt{current solution}~\label{algo_hmcsp:3}\\
        $\mathcal{U} \leftarrow \text{group } V_1, V_2 \text{ into label classes based on } \lambda_{H_1},\, \lambda_{H_2}$~\label{algo_hmcsp:4}\;
        \FSearch{$M, \mathcal{U}$}\;
        \Return{$M^*$}\;
    }
    \Fn{\FSearch{$M, \mathcal{U}$}} {
        $\textbf{if } |M| > |M^*| \textbf{ then } M^* \leftarrow M$~\label{algo_hmcsp:8}\;
        $UB \leftarrow \text{compute upper bound (Equation~\ref{eq:ub})}$~\label{algo_hmcsp:9}\;
        $\textbf{if } UB \leq |M^*| \textbf{ then return}$~\label{algo_hmcsp:10}\;
        $(U_1, U_2) \leftarrow \text{select a label class from } \mathcal{U} \text{ (Section~\ref{itm:h1})}$~\label{algo_hmcsp:11}\;
        $v_1 \leftarrow \text{select a node from } U_1 \text{ (Section~\ref{itm:h2})}$~\label{algo_hmcsp:12}\;
        \For {$v_2 \in U_2$} {
            $M_{\text{new}} \leftarrow M \cup \{(v_1, v_2)\}$~\label{algo_hmcsp:14}\;
            $\mathcal{U}_{\text{new}} \leftarrow \text{update label classes } \mathcal{U} \text{ based on } M_\text{new}$~\label{algo_hmcsp:15}\;
            \FSearch{$M_{\text{new}}, \mathcal{U}_{\text{new}}$}\;
        }
        $U'_1 \leftarrow U_1 \setminus \{v_1\}$\;
        $\mathcal{U} \leftarrow \mathcal{U} \setminus \{(U_1, U_2)\}$\;
        $\textbf{if } U' \neq \emptyset \textbf{ then } \mathcal{U} \leftarrow \mathcal{U} \cup \{(U'_1, U_2)\}$\;
        \FSearch{$M, \mathcal{U}$}\;
    }
\end{algorithm}

\subsection{Candidate Node Pair Selection}\label{subsec:cand_sel}

At any point during the search, the current set of mapped node pairs is denoted as $M = \{(v^1_1,\, v^1_2), \dots, (v^{|M|}_1,\, v^{|M|}_2)\}$. A pair of unmapped nodes $(u_1, u_2) \in (V_1 \setminus \{v^1_1, \dots, v^{|M|}_1\}) \times (V_2 \setminus \{v^1_2, \dots, v^{|M|}_2\})$ is considered a candidate if it satisfies the following three conditions:

\smallskip\noindent\textbf{(i)~Cell Value Equality.}
The two unmapped nodes $u_1$ and $u_2$ must have the same associated cell value $\ell$ from their tables to indicate a potential pair of overlapping cells, i.e., $\ell(u_1) = \ell(u_2)$.

\smallskip\noindent\textbf{(ii)~Hyperedge Compatibility.}
The mapping must preserve the structural consistency of row-hyperedges and column-hyperedges with already mapped nodes. Let $E^\text{row}_1$, $E^\text{col}_1$ be the row and column hyperedges in $H_1$, and $E^\text{row}_2$, $E^\text{col}_2$ be the row and column hyperedges in $H_2$. Then for each previously mapped pair $(v_1, v_2) \in M$, the pair $(u_1, u_2)$ must maintain the pairwise hyperedge structure. That is, whenever $u_1$ and $v_1$ both belong to a hyperedge in $H_1$, then $u_2$ and $v_2$ must belong to the corresponding hyperedge in $H_2$, and vice versa. More formally:
\[
    \begin{aligned}
    \quad (\exists e_1 \in E^\text{row}_1\ |\ \{u_1, v_1\} \subseteq e_1) &\iff (\exists e_2 \in E^\text{row}_2\ |\ \{u_2, v_2\} \subseteq e_2), \\
    (\exists e_1 \in E^\text{col}_1\ |\ \{u_1, v_1\} \subseteq e_1) &\iff (\exists e_2 \in E^\text{col}_2\ |\ \{u_2, v_2\} \subseteq e_2).
    \end{aligned}
\]
This ensures that $u_1$ and $u_2$ always belong to structurally equivalent row and column hyperedges with respect to the nodes that are already mapped. It enables the algorithm to incrementally preserve the hyperedge structure, adding one node pair at a time, without the need to explicitly enumerate entire hyperedges during the search.

To implement this efficiently, we maintain two auxiliary hash maps, one for rows and one for columns, that record the correspondence between hyperedges in the two graphs. These maps ensure that a pair of candidate nodes can only be added if their row- and column-hyperedges are either already matched by the current mapping, or both still unmatched and thus allowed to be matched. These maps are updated in $\mathcal{O}(1)$ time whenever a new node pair is added to $M$ and ensure that the mapping can always be extended to a consistent row and column isomorphism.

\smallskip\noindent\textbf{(iii)~Unique Hyperedge Membership.}
By construction, each node $u_1 \in V_1$ and $u_2 \in V_2$ belongs to exactly one row-hyperedge and one column-hyperedge only. This constraint enforces that each node corresponds to a unique cell in a table (exactly one row and one column), reflecting the underlying table structure encoded in the hypergraphs.

\subsection{Hypergraph-Aware Label Class}

To manage the candidates efficiently during the search, we group unmapped nodes into \emph{label classes} based on their cell values and hyperedges. For each unmapped node $u_1 \in V_1 \setminus \{v^1_1, \dots, v^{|M|}_1\}$, we define its label as $\lambda_{H_1}(u_1) = (\ell(u_1), \alpha_1(u_1), \beta_1(u_1))$, where:
\begin{itemize}[leftmargin=*]
    \item $\ell(u_1)$ is the cell value of node $u_1$,
    \item $\alpha_1(u_1) = [\mathbbm{1}_{(u_1,\, v^i_1) \in e} \text{ for } e \in E^\text{row}_1]^{|M|}_{i=1}$ encodes row-hyperedge incidence with each mapped node $v^i_1$,
    \item  $\beta(u_1) = [\mathbbm{1}_{(u_1,\, v^i_1) \in e} \text{ for } e \in E^\text{col}_1]^{|M|}_{i=1}$ encodes column-hyperedge incidence.
\end{itemize}
Similarly, for each $u_2 \in V_2 \setminus \{v^1_2, \dots, v^{|M|}_2\}$, we define $\lambda_{H_2}(u_2) = (\ell(u_2), \alpha_1(u_2), \beta_2(u_2))$ with $\alpha_1(u_2)$ and $\beta_2(u_2)$ defined analogously over $E^\text{row}_2$ and $E^\text{col}_2$ respectively.
A hyperedge-aware label class is then defined as a pair $(U_1, U_2)$, where all nodes in $U_1$ and $U_2$ share the same label $\lambda$, i.e., 
\[
    \begin{aligned}
        U_1 &= \{u_1 \in V_1 \setminus \{v^1_1, \dots, v^{|M|}_1\} \mid \lambda_{H_1}(u_1) = \lambda \}, \\
        U_2 &= \{u_2 \in V_2 \setminus \{v^1_2, \dots, v^{|M|}_2\} \mid \lambda_{H_2}(u_2) = \lambda \}.
    \end{aligned}
\]
Only node pairs $(u_1, u_2) \in U_1 \times U_2$ of the same class are considered to extend the mapping set $M$. This guarantees that the associated cell values and hyperedge incidence remain consistent, thereby maintaining the validity of isomorphism. After adding a valid node pair to the mapping set $M$, the label classes of the remaining unmapped nodes are updated to reflect the changes in $M$ (line~\ref{algo_hmcsp:15}).

\subsection{Upper Bound Computation and Pruning}

At each search state, \hyper{} computes an upper bound $\textit{UB}$ on the size of the largest possible mapping set that can be obtained by extending the current mapping $M$ (line~\ref{algo_hmcsp:9}). Let $\mathcal{U} = \{(U_1, U_2)\}$ be the set of current label classes. For each label class $(U_1, U_2) \in \mathcal{U}$, at most $\min(|U_1|, |U_2|)$ additional node pairs can be added to $M$. Therefore, the overall upper bound is given by:
\begin{equation}\label{eq:ub}
    \textit{UB} = |M| + \sum_{(U_1,\, U_2) \in \mathcal{U}} \min(|U_1|,\, |U_2|).
\end{equation}
If this bound is no greater than the size of the best solution found so far, i.e., $|\textit{UB}| \leq |M^*|$, the current branch is pruned and the algorithm backtracks (line~\ref{algo_hmcsp:10}).

\subsection{Heuristics}\label{subsec:heuristics}

At each search state, \hyper{} employs two heuristics to guide the search, one for selecting a label class (line~\ref{algo_hmcsp:11}) and one for selecting a node within the chosen class (line~\ref{algo_hmcsp:12}). 

\smallskip\noindent\textbf{Label Class Selection.}\label{itm:h1}
Among all label classes $(U_1, U_2) \in \mathcal{U}$, the algorithm selects the one with the smallest $\max(|U_1|, |U_2|)$. If multiple classes share the same minimum value, it selects the one that contains a node $v_1 \in U_1$ with the highest degree, where degree is defined as the total number of nodes connected within its row and column hyperedges.

\smallskip\noindent\textbf{Node Selection.}\label{itm:h2}
Within the chosen label class $(U_1, U_2)$, the algorithm selects the node $v_1 \in U_1$ with the highest degree across both row- and column-hyperedges.

\subsection{Hypergraph-Specific Components}\label{subsec:adaptations} 

We summarize the non-trivial adaptations made to the core components of \mcsplit{} to fit our hypergraph framework:

\smallskip\noindent\textbf{From Adjacency Labels to Hypergraph-Aware Labels.}
In \mcsplit{}, adjacency labels are used to encode whether an unmapped node is adjacent to any mapped nodes. However, this binary notion is not applicable in hypergraphs, where edges become higher order. In our hypergraph setting, the mapping between nodes is not determined by adjacency, but by cell values, row-hyperedge and column-hyperedge memberships simultaneously. To capture this information, we therefore introduce hypergraph-aware label $\lambda_{H}(u) = (\ell(u), \alpha(u), \beta(u))$ and maintain label classes $(U_1, U_2)$, where nodes within each class share the same label. This structure encapsulates all critical information needed to extend the current mapping $M$ in a compact form.

\smallskip\noindent\textbf{Compatibility Checks.}
In \mcsplit{}, two nodes from both graphs are considered matchable if their adjacency labels coincide. In \hyper{}, a pair $(u_1, u_2)$ is matchable only if it satisfies three domain-specific conditions: (1) cell value equality; (2) hyperedge compatibility; (3) unique hyperedge membership.

\smallskip\noindent\textbf{Label Class Refinement.}
In \mcsplit{}, when a new node pair is added, the label classes are updated based on the adjacency to mapped nodes. In \hyper{}, they are updated based on hyperedge compatibility. Specifically, after adding a pair $(v_1, v_2)$, every remaining node $u$ is regrouped based on whether its row matches $\alpha(v)$, its column matches $\beta(v)$, both, or neither. This finer-grained partitioning reduces the size of label classes, keeps the branching factor low and tightens the upper bound in practice.

\smallskip\noindent\textbf{Upper Bound Computation.}
Recall from Equation~\ref{eq:ub}, the upper bound is $UB = |M| + \sum_{(U_1,\, U_2) \in \mathcal{U}} \min(|U_1|,\, |U_2|)$. In \hyper{}, $U_1$ and $U_2$ are now defined over the hypergraph-aware labels. Due to the way our hypergraph setting partitions the label classes based on cell values and incidence for the two types of hyperedges, the resulting upper bound is empirically tighter than the one in \mcsplit{}, thereby yielding substantial pruning in practice.

\smallskip\noindent\textbf{Heuristics.}
We maintain the general approach for label class and node selection in the heuristics, but we reinterpret degree as the total number of incident nodes within a node's unique row- and column-hyperedges. The label class  continues to prioritize the class with the smallest $(|U_1|,\, |U_2|)$, which empirically leads to more effective pruning and more efficient search by minimizing the branching factor early in the search tree.

\subsection{Tolerance-Based Pruning}\label{subsec:tolerance}
% \huang{perhaps, we can divide our method into several section? there is only one subsection for our method, which looks weird.}

The branch-and-bound search tree grows exponentially in the worst case. Even after finding the optimal solution, the search may continue because some unexplored branches have upper bounds near the optimal value, which prevents pruning by the exact pruning rule. To mitigate this overhead and speed up convergence, we introduce a \emph{relative-tolerance parameter} $\delta$. Let $UB$ be the upper bound of the current branch and $|M^*|$ be the size of the best solution found so far. If $UB \leq (1 + \delta)|M^*|\text{, where } \delta \geq 0$, the branch is pruned immediately, as we consider any potential gain of $\delta|M^*|$ additional pairs negligible. This relaxes the pruning condition and accelerates the search, shrinking the size of the tree. Branches that could exceed $(1 + \delta)|M^*|$ will not be pruned, ensuring the algorithm explores all possibilities and eventually finds a solution within a $(1 + \delta)$-approximation of the optimal solution. By setting $\delta \geq 0$, \hyper{} guarantees a solution within this approximation. While $\delta = 0$ restores the exact algorithm, larger values of $\delta$ offer significant speed-ups in practice, at the cost of a bounded loss in optimality.

% *Note: Older version of this section (with Felix's comments) is moved to "archive/2_tex_sigmod/5_experiment.tex" for readability.

\section{Experiments}\label{sec:exp}

This section presents a comprehensive evaluation of our proposed method. We first describe the experimental setup (Section~\ref{subsec:exp_setup}), then analyze the method's effectiveness in identifying accurate and complete overlaps (Section~\ref{subsec:effectiveness}), and present three case studies illustrating practical applications (Section~\ref{subsec:case_studies}). We further assess computational efficiency (Section~\ref{subsec:efficiency}), scalability to large tables (Section~\ref{subsec:scalability}), and sensitivity to key parameters (Section~\ref{subsec:paramstudy}).

\subsection{Experimental Setup}\label{subsec:exp_setup}

\noindent\textbf{Datasets.}
We evaluate on two benchmark datasets, \emph{\wiki{}} and \emph{\git{}}, released by Armadillo~\cite{pugnaloni2025table}, the most recent work on table overlap size estimation. Armadillo systematically pairs tables from the Wikipedia table corpus~\cite{bleifuss2021structured, bleifuss2021secret} (the same source used by \sloth{}~\cite{zecchini2024determining}) and the GitHub repositories~\cite{hulsebos2023gittables}, producing a uniform distribution of overlap sizes that enables unbiased comparison across varying overlap complexities. Although designed for estimating overlap sizes rather than finding overlaps, the datasets provide ground truth for the largest rectangular overlaps across structurally diverse tables. Defined under rectangular constraints, these annotations remain suitable for evaluating shape-agnostic overlap, as they serve as a lower bound for comparison. 

\emph{\wiki{}} and~\emph{\git{}} contain 128k and 256k tables, respectively, with sizes ranging from 2 to \numprint{99856} cells, and their distributions are shown in Figure~\ref{fig:table_stats}. We also report hypergraph statistics in Figure~\ref{fig:hypergraph_stats}. Overall, \git{} has larger and heavier-tailed node and row-hyperedge counts than \wiki{}, while column-hyperedge counts remain relatively small in both datasets. For evaluation, we uniformly sample 500k table pairs from each dataset following Armadillo’s pairing distribution, as our task does not require cross-validation.

\begin{figure}[t]
    \centering
    \subfloat[\wiki{}]{
        \includegraphics[width=0.23\textwidth]{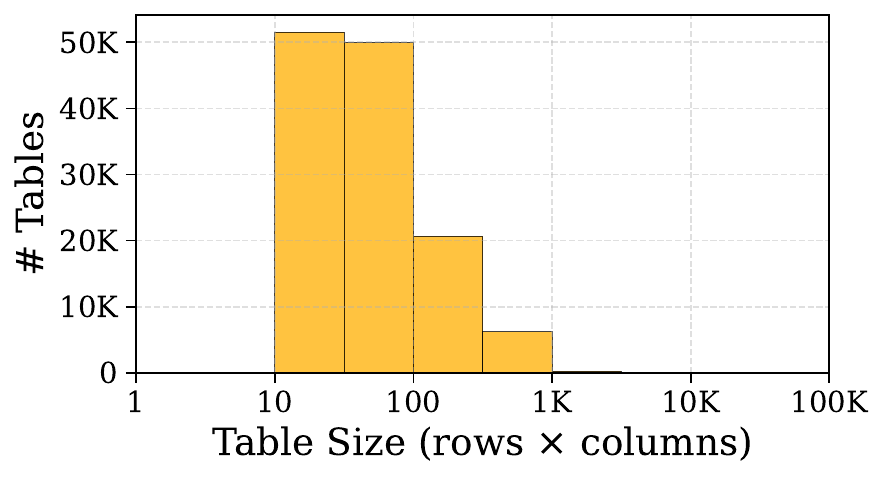}
    }
    \subfloat[\git{}]{
        \includegraphics[width=0.23\textwidth]{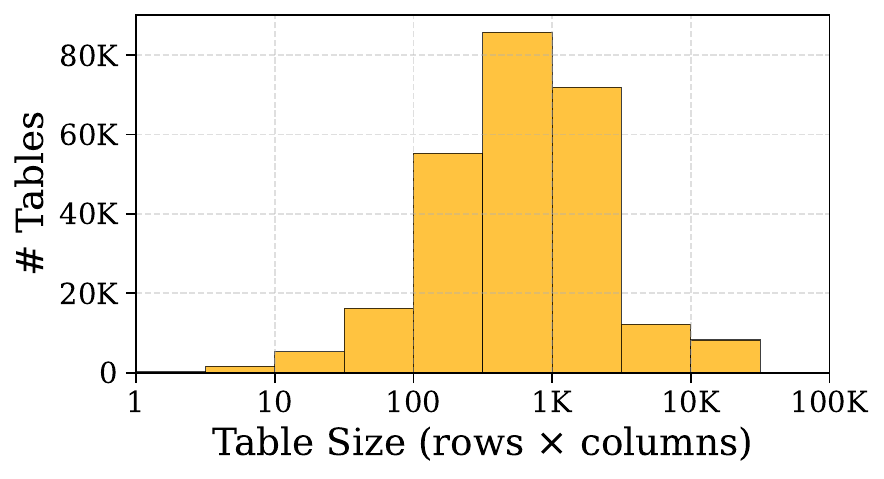}
    }
    \caption[]{Distributions of table sizes (log-scale)}\label{fig:table_stats}
    % \vspace{-1em}
\end{figure}
\begin{figure}[t]
    \centering
    \subfloat[\wiki{}]{
        \includegraphics[width=0.48\textwidth]{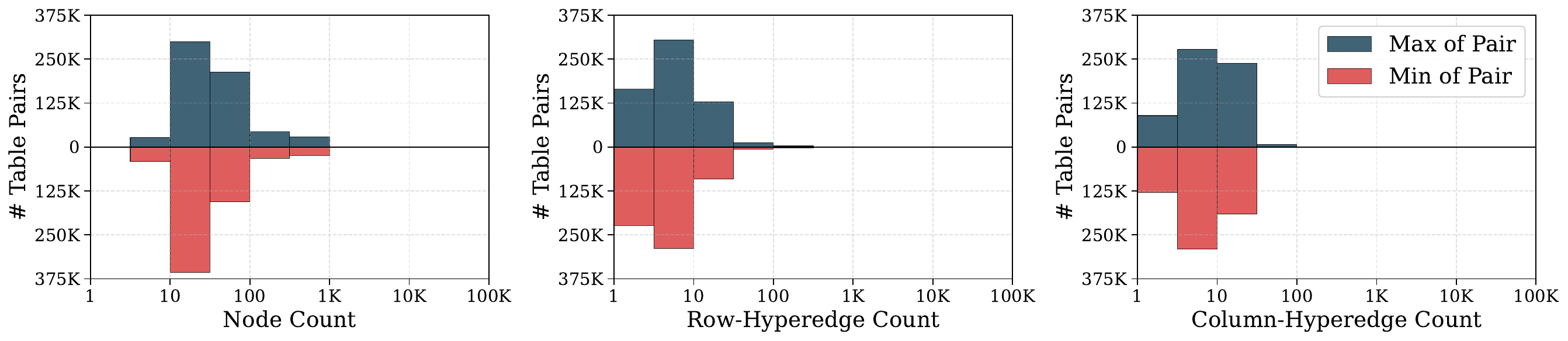}
    }
    \vfill
    \vspace{-1em}
    \subfloat[\git{}]{
        \includegraphics[width=0.48\textwidth]{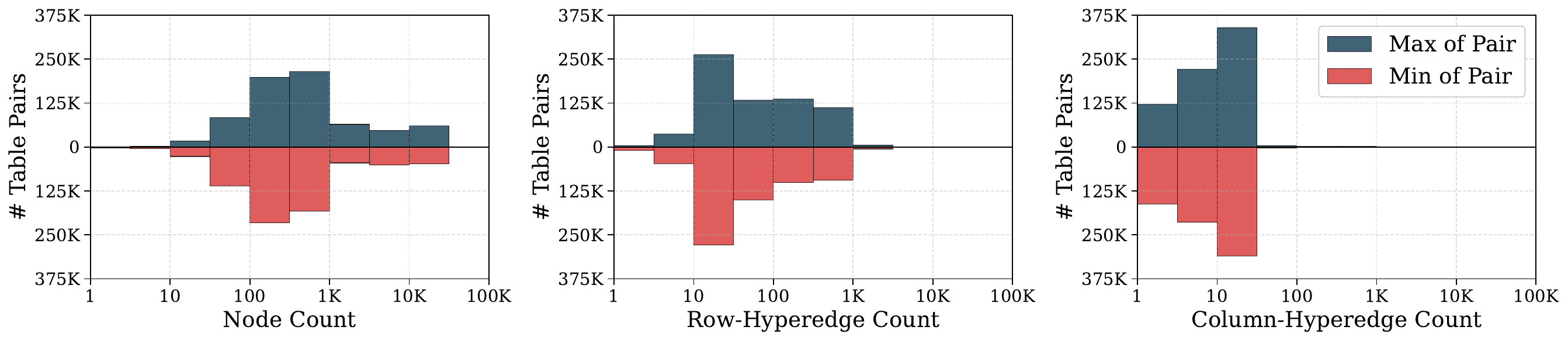}
    }
    \caption[]{Distributions of node counts, row-hyperedge counts and column-hyperedge counts (log-scale)}\label{fig:hypergraph_stats}
\end{figure}

\smallskip\noindent\textbf{Methods for Comparison.} 
We compare \hyper{} against three baselines: one for the rectangular overlap problem and two addressing \salto{} via similarity and LLM approaches.

\begin{itemize}[leftmargin=*]
    \item \sloth{}~\cite{zecchini2024determining}: A state-of-the-art exact solver for the largest rectangular overlap problem.

    \item \crjaccard{}: A similarity-based heuristic that aligns rows and columns using greedy Jaccard similarity. It first computes pairwise Jaccard similarities between all columns, matching each column in one table to the most similar unmatched column in the other. Using the aligned columns, it then performs greedy row matching by computing pairwise Jaccard similarities over corresponding column values. The final overlap is obtained through cell-wise comparison over the matched rows and columns, producing a non-rectangular overlap region.

    \item \gpt{}: A large language model (LLM)-based approach that serializes each table into a structured text format and prompts the model to infer the row and column permutations that maximize \salto{}. We use deterministic decoding (\emph{temperature}=0) and issue a single prompt per table pair. Prompt template and details are provided in the supplementary material (Appendix~\ref{app:prompt}).

    \item \hyper{}: Our proposed exact method for solving \salto{}. Unless otherwise stated, we run the exact configuration (i.e., tolerance-based pruning disabled) by setting $\delta=0$ (Section~\ref{subsec:tolerance}). We only enable it in the parameter study (Section~\ref{subsec:paramstudy}).
\end{itemize}

\noindent
\crjaccard{} and \gpt{} serve as heuristic baselines that approximate \salto{} through similarity-based alignment and LLM inference, respectively. In contrast, \sloth{} provides an exact solution to the more restrictive rectangular variant of the \salto{} problem. Other potential approaches---such as Jaccard variants (standard, bag-based, size-normalized)~\cite{pugnaloni2025table}, Armadillo~\cite{pugnaloni2025table} for overlap size estimation, and techniques like schema matching~\cite{liu2025magneto, aumueller2005schema, hong2002coma}, entity resolution~\cite{wang2023sudowoodo, wu2020zeroer}, or unionable table discovery~\cite{nargesian2018table, khatiwada2023santos, fan2023semantics}---are incompatible with our task, as they either estimate only the extent of overlap or perform schema- or entity-level alignment rather than identifying exact overlapping cells. More details are discussed in Sections~\ref{sec:intro} and~\ref{sec:related_work}.
% Astute readers may find several seemingly feasible competitors, such as standard Jaccard similarity, Jaccard under bag semantics, its variants normalized by the smaller table size, and Armadillo, introduced in~\cite{pugnaloni2025table}, that estimate overlap ratios. However, these methods estimate only the extent of overlap, without identifying the actual overlapping cells, and hence are incompatible with our method.  This distinction is critical for two reasons. \todo{First, our problem requires verifiable estimations: overlap ratio alone does not ensure that the estimated overlap corresponds to any valid row and column permutation or meaningful structural match. Second, these methods are incompatible with our task because it demands exact, interpretable overlap detection based on aligned table content. }Accurate localization of overlap is essential not only for verification but also for downstream applications that rely on the exact overlap positions.

\smallskip\noindent\textbf{Environment}. 
All experiments were conducted on a Linux server with Intel Xeon~E5 CPUs and 512\,GB of RAM. The code, implemented in Python, is available at~\cite{hypersplitcodeanon}.

\subsection{Effectiveness Analysis}\label{subsec:effectiveness}

\begin{figure}[t]
    \centering
    \vspace{-1em}
    \subfloat[\hyper{} vs. \sloth{}]{
        \includegraphics[width=0.43\textwidth]{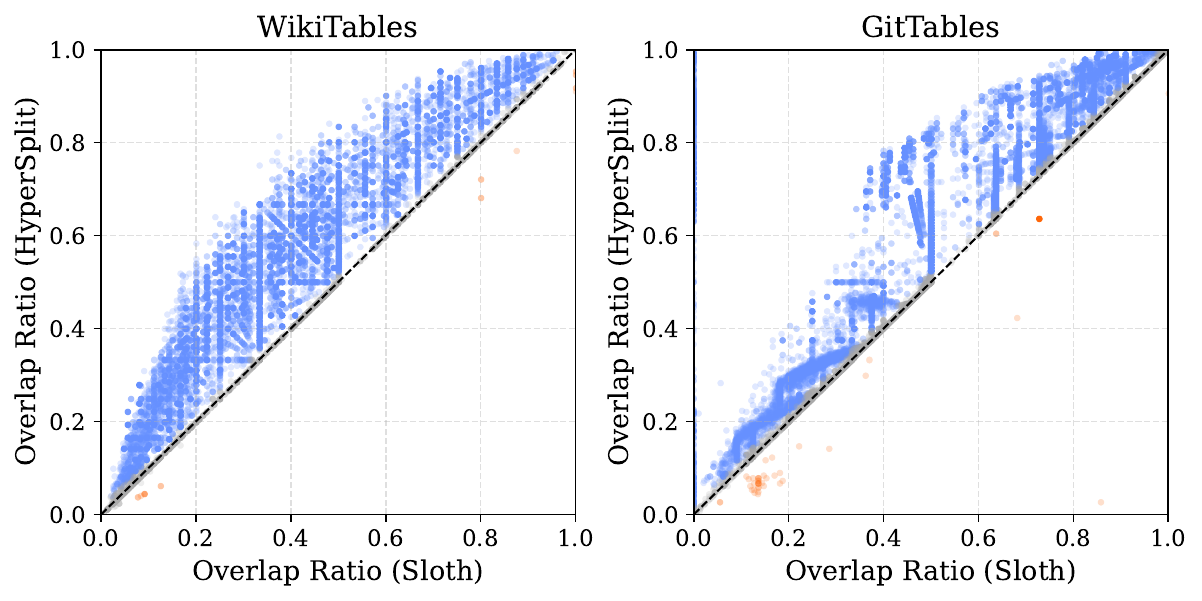}
    }
    \vfill
    \vspace{-1em}
    \subfloat[\hyper{} vs. \crjaccard{}]{
        \includegraphics[width=0.43\textwidth]{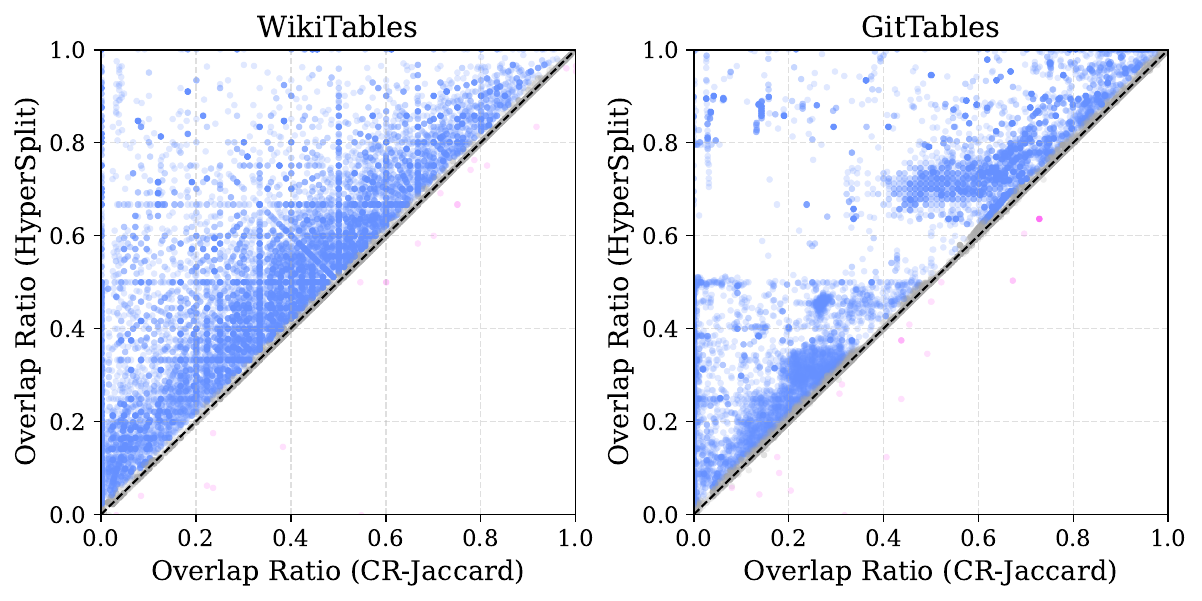}
    }
    \vfill
    \vspace{-1em}
    \subfloat[\hyper{} vs. \gpt{}]{
        \includegraphics[width=0.43\textwidth]{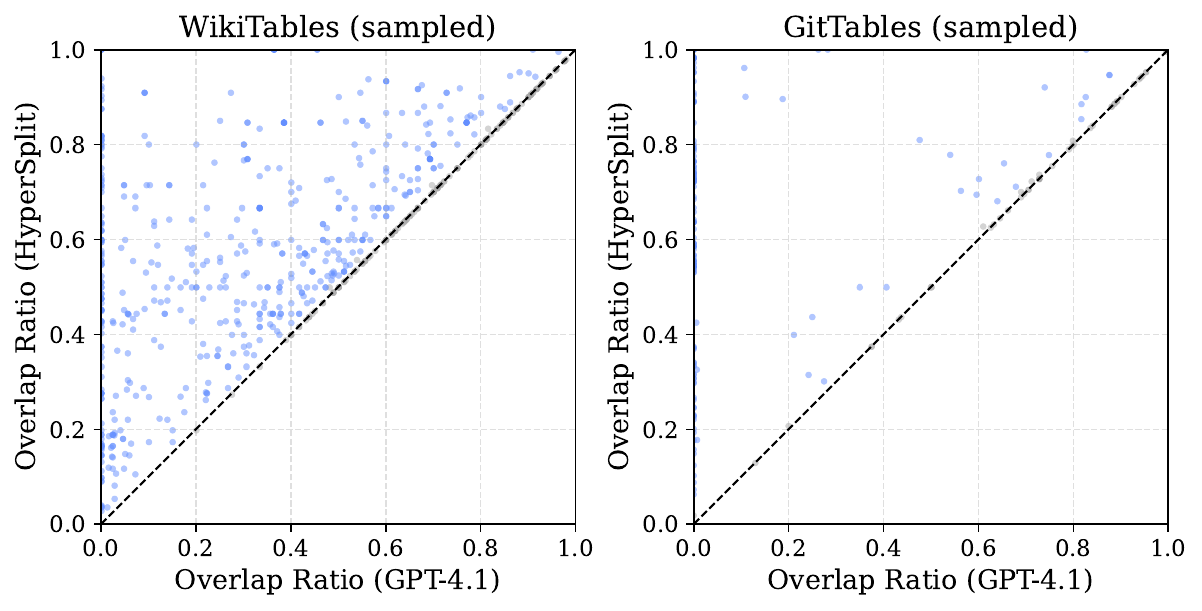}
    }
    \caption[]{Overlap ratio comparison of \hyper{} against \sloth{}, \gpt{} and \crjaccard{}}\label{fig:effectiveness}
    % \vspace{-1em}
\end{figure}

We evaluate the effectiveness of \hyper{} in identifying overlaps between table pairs. The comparison includes two baselines for the \salto{} problem, \crjaccard{} and \gpt{}, and one for the rectangular overlap problem, \sloth{}, to contrast the two formulations. The evaluation is based on the \emph{overlap ratio}, defined as the fraction of overlapping cells relative to the smaller table’s area. This ratio provides an interpretable and size-normalized measure of overlap completeness that remains comparable across tables of different scales. Figure~\ref{fig:effectiveness} plots the ratios of overlap obtained by each baseline (x-axis) against those obtained by \hyper{} (y-axis) on \wiki{} and \git{}. For \gpt{}, we report results on \numprint{1000} randomly sampled pairs due to its high inference time, as further discussed in Section~\ref{subsec:efficiency}. Points above the diagonal indicate pairs where \hyper{} identifies a larger overlap.

Across both datasets, almost all points lie above the diagonal, showing that \hyper{} consistently finds larger overlaps. The advantage is especially clear over \sloth{}, where many points cluster near the x-axis---cases where rectangular constraints fail to capture irregular or non-contiguous matches that \hyper{}’s shape-agnostic formulation successfully recovers. Compared with \crjaccard{}, which performs greedy top-1 row and column matching based on Jaccard similarity, \hyper{} achieves higher overlap ratios by avoiding locally optimal but globally inconsistent alignments. For \gpt{}, the performance gap reflects the inherent combinatorial complexity of reasoning over large tabular search spaces, where LLM inference struggles to maintain consistent structural alignment across rows and columns.

Since \hyper{} performs exact search over cell mappings, it should always find at least as large an overlap as any baseline. The few points below the diagonal correspond to cases where the search was terminated early due to time limits (60 seconds), returning the largest overlap found within the allocated runtime. In summary, \hyper{} captures a complete and consistent overlaps beyond rectangular or heuristic matching, demonstrating its strength in uncovering valid cell-level matches between tables.

\subsection{Case Studies}\label{subsec:case_studies}

We conduct three case studies to illustrate the practical impact of \hyper{} in real-world scenarios. Across all case studies, we compare \hyper{} against \sloth{}, which does compute exact overlaps but is restricted to rectangular regions, making it the most relevant baseline for highlighting the advantages of our shape-agnostic formulation.
% As mentioned in Section~\ref{sec:intro}, a natural alternative for discovering overlap is through schema matching, entity resolution, or unionable table discovery. These methods, however, rely on schema or metadata and cannot localize exact overlapping cells. In contrast, \hyper{} operates directly on raw tables, requires no prior knowledge of table or attribute semantics, and recovers the actual, exact overlap at the cell level. Such capabilities are essential in applications where metadata is incomplete or inconsistent, and where downstream tasks demand exact overlap localization, such as detecting potential copying across sources (Case Study~1), identifying redundant table content (Case Study~2), and comparing different table versions to localize shared or edited regions (Case Study~3). Throughout all case studies, we compare against \sloth{}, which does compute exact overlaps but is restricted to rectangular regions, making it the most relevant baseline for highlighting the advantages of our shape-agnostic formulation.

\begin{table}[t]
    \centering
    \small
    \setlength{\tabcolsep}{4pt}
    \begin{threeparttable}
        \caption{Clustering quality across thresholds $\gamma$}\label{tab:cluster_quality}
        \begin{tabular}{cccc ccc ccc}
            \toprule
            \multirow{2}{*}{\textbf{$\gamma$}} &
            \multicolumn{3}{c}{\textbf{\jaccard{}}} &
            \multicolumn{3}{c}{\textbf{\sloth{}}} &
            \multicolumn{3}{c}{\textbf{\hyper{}}} \\
            \cmidrule(lr){2-4} \cmidrule(lr){5-7} \cmidrule(lr){8-10}
             & \textbf{ARI} & \textbf{NMI} & \textbf{\#\,ST} 
             & \textbf{ARI} & \textbf{NMI} & \textbf{\#\,ST} 
             & \textbf{ARI} & \textbf{NMI} & \textbf{\#\,ST} \\
            \midrule
            0.81 & 0.93 & 0.89 & 14 & 0.89 & 0.88 & 13 & \textbf{1.00} & \textbf{1.00} & \textbf{15} \\
            0.72 & 0.83 & 0.81 & 14 & 0.90 & 0.91 & 15 & \textbf{1.00} & \textbf{1.00} & \textbf{17} \\
            0.64 & 0.56 & 0.63 & 14 & 0.70 & 0.73 & 17 & \textbf{1.00} & \textbf{1.00} & \textbf{23} \\
            0.56 & $-0.02$ & 0.32 & 16 & 0.54 & 0.75 & 40 & \textbf{1.00} & \textbf{1.00} & \textbf{42} \\
            0.49 & $-0.02$ & 0.27 & 21 & 0.70 & 0.72 & 46 & \textbf{1.00} & \textbf{1.00} & \textbf{47} \\
            \bottomrule
        \end{tabular}
        \begin{tablenotes}[flushleft]
            \footnotesize
            \item \textbf{ARI} = Adjusted Rand Index; \textbf{NMI} = Normalized Mutual Information
            \item \textbf{\#\,ST} = number of similar tables detected across all clusters
        \end{tablenotes}
    \end{threeparttable}
    % \vspace{-1em}
\end{table}

\subsubsection{Case Study 1 -- Cross-Source Copy Detection} 

Detecting potential copying of tables across web sources is a key problem in truth discovery~\cite{li2012truth, li2015scaling, dong2009truth, dong2010global}. As copying typically manifests as high content similarity, we detect similar tables directly from raw tables without any manual preprocessing such as schema alignment or reliance on metadata. This setting is more challenging, as similar tables must be identified purely from structural and cell-level content without schema guidance.

We use the \emph{Stock} dataset~\cite{li2012truth}, which contains daily stock tables collected from 55~sources over 21~days. Following prior work~\cite{li2012truth, zecchini2024determining}, two tables are considered similar if their similarity exceeds a given threshold~$\gamma$. For \hyper{}, similarity is the \emph{overlap ratio}, i.e., the number of overlapping cells normalized by the smaller table’s total cell count. For \sloth{}, which restricts overlaps to rectangular regions, similarity is the product of the row and column overlap ratios: $\gamma=0.81$ corresponds to $0.9$ row $\times$ $0.9$ column overlap, while $\gamma=0.72$, $0.64$, $0.56$, and $0.49$ correspond to $(0.8, 0.9)$, $(0.8, 0.8)$, $(0.7, 0.8)$, and $(0.7,0.7)$ row–column overlaps, respectively. Following previous studies, we also use \jaccard{}, where each table is flattened into a set of cells values and compared using the standard Jaccard index. Clusters of similar tables are then formed by grouping all tables whose pairwise similarities exceed the threshold~$\gamma$.

Table~\ref{tab:cluster_quality} reports the number of detected similar tables (\#\,ST) and cluster quality, measured by the \emph{Adjusted Rand Index} (ARI)~\cite{hubert1985comparing} and \emph{Normalized Mutual Information} (NMI)~\cite{danon2005comparing}, two widely adopted metrics for evaluating clustering quality, where higher values indicate better agreement with the ground truth. Since \hyper{} provides exact and complete cell-level overlap, its clusters are used as the reference ground truth for comparing the other methods. Across all thresholds, \hyper{} detects more similar tables than both \sloth{} and \jaccard{}. \sloth{}’s rectangular constraint misses irregular or non-contiguous overlaps, yielding fragmented clusters and lower ARI and NMI. \jaccard{}, which flattens tables under set semantics and ignores repetition and positional alignment of cell values, deteriorates rapidly as $\gamma$ decreases. These results highlight that exact, cell-level overlap is critical for accurate cluster formation.

Figure~\ref{fig:cs1} illustrates this behavior. While \sloth{} and \jaccard{} split cohesive groups into smaller, disjoint clusters, \hyper{} groups tables that share the same underlying content. Together with Table~\ref{tab:cluster_quality}, these findings show that \hyper{}'s \salto{} captures more complete and content-consistent clusters across sources.

\begin{figure}[t]
    \centering
    \includegraphics[width=0.48\textwidth]{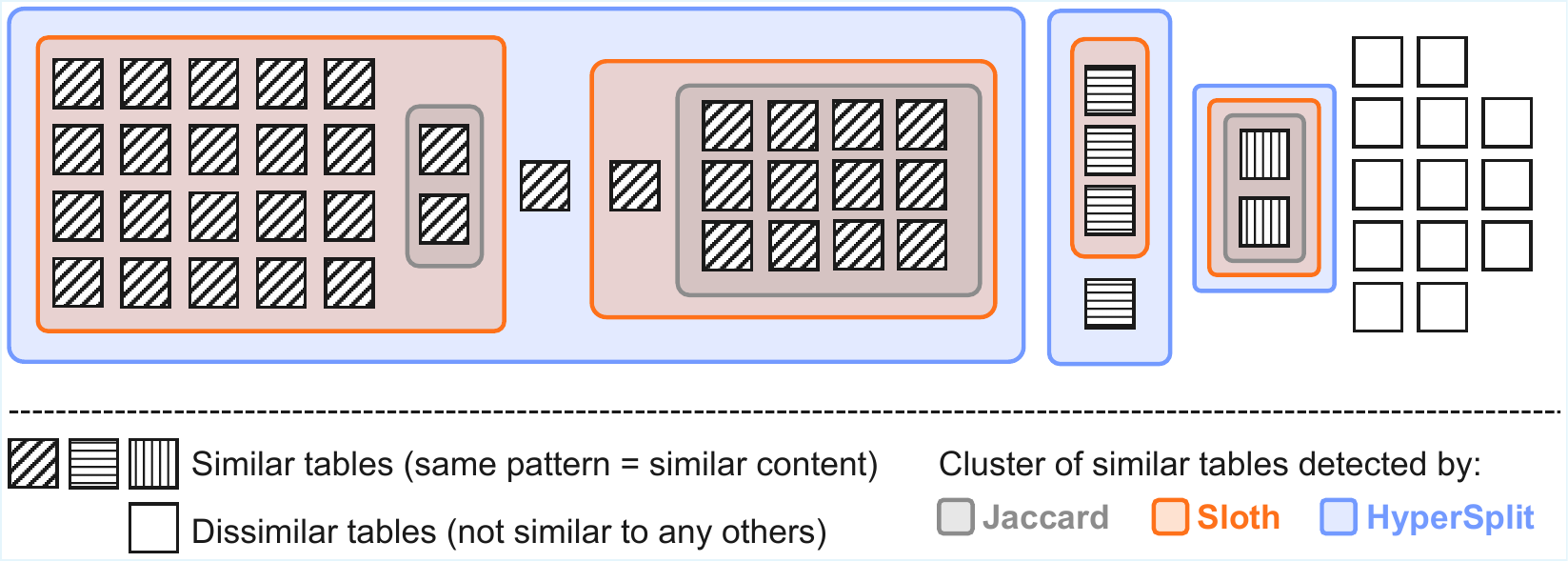}
    \caption{Clusters of similar tables detected by Jaccard, \sloth{} and \hyper{} at \boldsymbol{$\gamma=0.56$}. Each square represents a table, patterned squares indicate tables sharing similar content.}\label{fig:cs1}
\end{figure}

\subsubsection{Case Study 2 -- Data Deduplication} 

\begin{figure*}[t]
    \centering
    \includegraphics[width=0.98\textwidth]{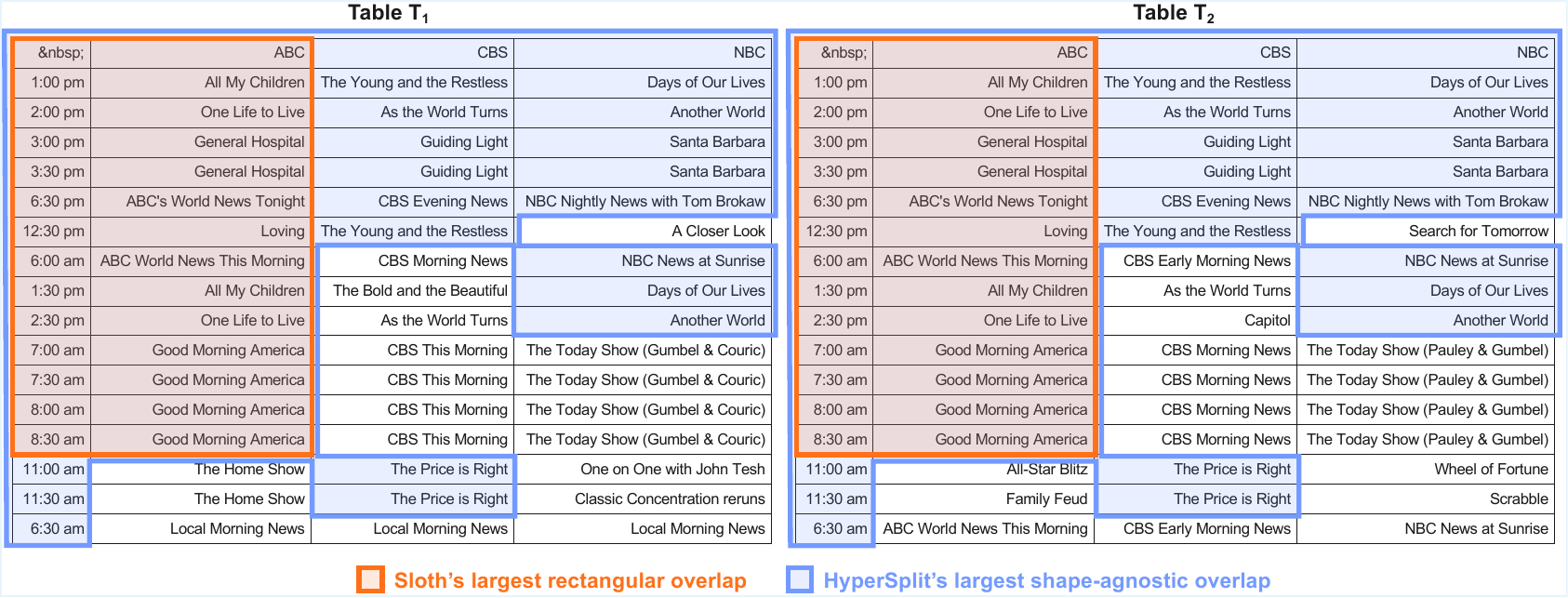}
    \caption{Comparison between \sloth{}’s largest rectangular overlap and our \hyper{}'s \salto{} between two tables}\label{fig:deduplication}
    % \vspace{-1em}
\end{figure*}

We next study redundant content deduplication within tables in large repositories~\cite{koch2023duplicate, chu2016data, ilyas2015trends}, such as Wikipedia, where overlapping content often arises from copy-paste, template reuse, or partial revisions across pages~\cite{bleifuss2021secret, bhagavatula2015tabel}. Localizing repeated cells supports repository maintenance by identifying tables that duplicate curated content, reducing inconsistent updates when the same facts are edited in one place but not others, and guiding consolidation into a single source of truth. Unlike entity resolution or schema matching, which target fuzzy record-level duplication or semantic attribute alignment, our goal here is to localize the exact shared cells between two tables and thereby expose redundant regions that can be consolidated, removed, or tracked to propagate edits consistently across duplicated copies.

We analyze a \wiki{} table pair with shared content (Figure~\ref{fig:deduplication}). By matching identical cell values and enforcing a single globally consistent row/column alignment, \hyper{} recovers a large shared region of 49 cells (blue). In contrast, \sloth{} detects only 28 cells (orange) because its rectangular constraint excludes valid matches outside the largest contiguous block. This example illustrates that shape-agnostic overlap can capture a more complete redundant region even when the shared content is distributed across multiple aligned row/column pairs. This finer-grained localization supports redundancy reduction and can provide evidence of derivation or reuse between tables.

Our formulation uses exact equality to identify overlaps. In cases with heterogeneous surface forms, such as unit differences or abbreviation and lexical variants, one can normalize or canonicalize values before constructing the hypergraphs. Supporting fuzzy or learned matching would instead require redefining the overlap objective, for example with thresholded or weighted matches -- we leave this direction outside the scope of this paper. 
Another consideration is homonymous values, where identical strings have different meanings and may lead to spurious matches under equality. A practical workaround is to use column context, such as headers, units, or types to restrict matches to compatible columns. This context can be taken from metadata when available or inferred from column values. Finally, if the overlap contains small isolated components, meaning tiny disconnected groups of matched cells that may look spurious or be undesirable for a downstream task, one can post-filter the result by retaining only connected components above a minimum size.

\subsubsection{Case Study 3 -- Version Comparison} 

\begin{figure}[t]
    \centering
    \includegraphics[width=0.47\textwidth]{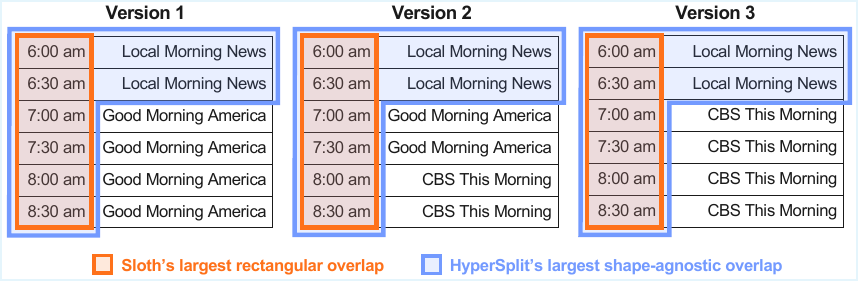}
    \caption{Comparison between \sloth{}’s largest rectangular overlap and \hyper{}'s \salto{} across three table versions}\label{fig:version_compare}
\end{figure}

We extend the analysis to version comparison~\cite{bleifuss2018exploring}, where multiple versions of the same table evolve over time. In this setting, overlap detection helps localize edits, quantify changes, and identify core regions preserved across versions. We analyze tables from \wiki{} representing different versions of the same entry and apply both \sloth{} and \hyper{} to compute pairwise overlaps and determine the largest shared region across all versions.

As shown in Figure~\ref{fig:version_compare}, we illustrate three versions of a table to demonstrate the observed behavior. Several cells (``Good Morning America'' and ``CBS This Morning'') are not simultaneously shared across all three versions, even though each pair does include these values in their pairwise maximal overlap. This prevents the overlap from forming a single contiguous rectangle. Since \sloth{} enforces rectangular overlap, such constraint prevents it from identifying additional shared cells (``Local Morning News'') in the affected column. In contrast, \hyper{} captures all shared cells, including those that become non-contiguous after local edits, yielding a more complete and meaningful overlap. Similar patterns are observed across other versioned tables where local edits disrupt rectangular alignment. The ability is crucial for table evolution scenarios: as tables grow and undergo edits, such as insertions, deletions, or modifications, rectangular alignment is often disrupted, causing \sloth{} to miss valid overlaps. By preserving shared content despite such changes, \hyper{} enables reliable version tracking, change localization, and update propagation across evolving tables.

This setting also connects to semantic data versioning systems, such as Explain-Da-V~\cite{shraga2023explaining}, which characterize changes at the level of attributes and entities. \salto{} is complementary: it provides exact, permutation-invariant cell overlaps that can serve as a low-level primitive to identify the unchanged cell regions between two versions, which can then be fed into Explain-Da-V for higher-level explanations. Applying \salto{} in this workflow also raises challenges, such as handling heterogeneous value representations beyond exact equality and scaling to long version histories where many pairwise overlaps would need to be computed.

\subsection{Efficiency Analysis}\label{subsec:efficiency}

\begin{table}[t]
    \small
    \centering
    % \caption{Mean $\pm$ standard deviation of runtime (s) across table pairs for each method}
    \caption{Runtime per table pair on \wiki{} and \git{} in seconds (mean \boldsymbol{$\pm$} std). For \hyper{}, \textsf{LabelClassSel} and \textsf{NodeSel} are the two selection heuristics; ``w/o'' removes the indicated one (or both).}
    \label{tab:efficiency}
    \begin{tabular}{lcc}
        \toprule
        \textbf{Method} & \textbf{\wiki{}} & \textbf{\git{}} \\
        \midrule
        \sloth{} & 0.856 $\pm$ 4.700 & 0.180 $\pm$ 2.045 \\  
        \crjaccard{}  & 0.001 $\pm$ 0.002 & 0.162 $\pm$ 0.446 \\
        \gpt{}        & 2.551 $\pm$ 3.685 & 13.537 $\pm$ 39.323 \\
        \hyper{}      & 0.358 $\pm$ 2.997 & 2.702 $\pm$ 9.091 \\
        \hyper{} (w/o \textsf{LabelClassSel}) & 8.344 $\pm$ 20.284 & 38.181 $\pm$ 27.970 \\  
        \hyper{} (w/o \textsf{NodeSel}) & 6.590 $\pm$ 18.364 & 29.978 $\pm$ 29.379 \\ 
        \hyper{} (w/o both) & 8.376 $\pm$ 20.305 & 38.388 $\pm$ 27.879 \\
        \bottomrule
    \end{tabular}
    % \vspace{-1em}
\end{table}

We analyze the runtime of \sloth{}, \crjaccard{}, \gpt{}, and \hyper{} on the \salto{} problem. As shown in Table~\ref{tab:efficiency}, \crjaccard{} is the fastest because it performs heuristic greedy matching based solely on Jaccard similarity, without exploring the combinatorial search space. In contrast, \gpt{} is substantially slower, approximately $5$--$7\times$ slower than \hyper{} on average, due to token-level reasoning and the need to sequentially process each table pair through large-model inference, making it computationally expensive for large-scale evaluation. 

\sloth{} is much faster than \hyper{}, as expected, because finding the largest rectangular overlap admits a more constrained search space than recovering the largest shape-agnostic overlap. Interestingly, \sloth{} is faster on \git{} than on \wiki{}: \sloth{}’s cost is dominated by the number of seeds, i.e., column pairs whose value sets intersect, rather than by table size. Tables in \wiki{} exhibit far more such intersections, producing a larger number of seeds and a larger search space, whereas tables in \git{} yield fewer seeds and thus terminate faster. This difference reflects the efficiency--expressiveness tradeoff between rectangular and shape-agnostic overlap definitions.

\hyper{} provides a practical balance between computational cost and completeness. While performing exact cell-level search, it remains efficient, averaging through effective pruning heuristics and hypergraph-aware candidate selection, which significantly reduce the search space. This is further supported by the ablation results: disabling either \textsf{LabelClassSel} or \textsf{NodeSel} heuristic (Section~\ref{subsec:heuristics}) increases runtime by $11$--$23\times$, and \textsf{LabelClassSel} contributes the largest savings. Taken together, \hyper{} achieves practical efficiency while remaining exact.

We further examine \hyper{}'s efficiency under challenging cases by tracking its \emph{runtime-overlap progression} on large table pairs. For this analysis, we select a set of complex pairs from \git{} where the tables exceed \numprint{10000} cells. Figure~\ref{fig:efficiency} plots the mean overlap ratio achieved by \hyper{} over time, with the shaded region indicating one standard deviation across selected pairs. \hyper{} exhibits a clear convergence trend. The overlap ratio increases rapidly within the first few seconds and stabilizes after roughly 50\,s. This behavior shows that \hyper{} effectively expands the overlap efficiently while keeping computation manageable through effective pruning of unpromising branches as the search progresses.

\begin{figure}[t]
    \centering
    \includegraphics[width=0.44\textwidth]{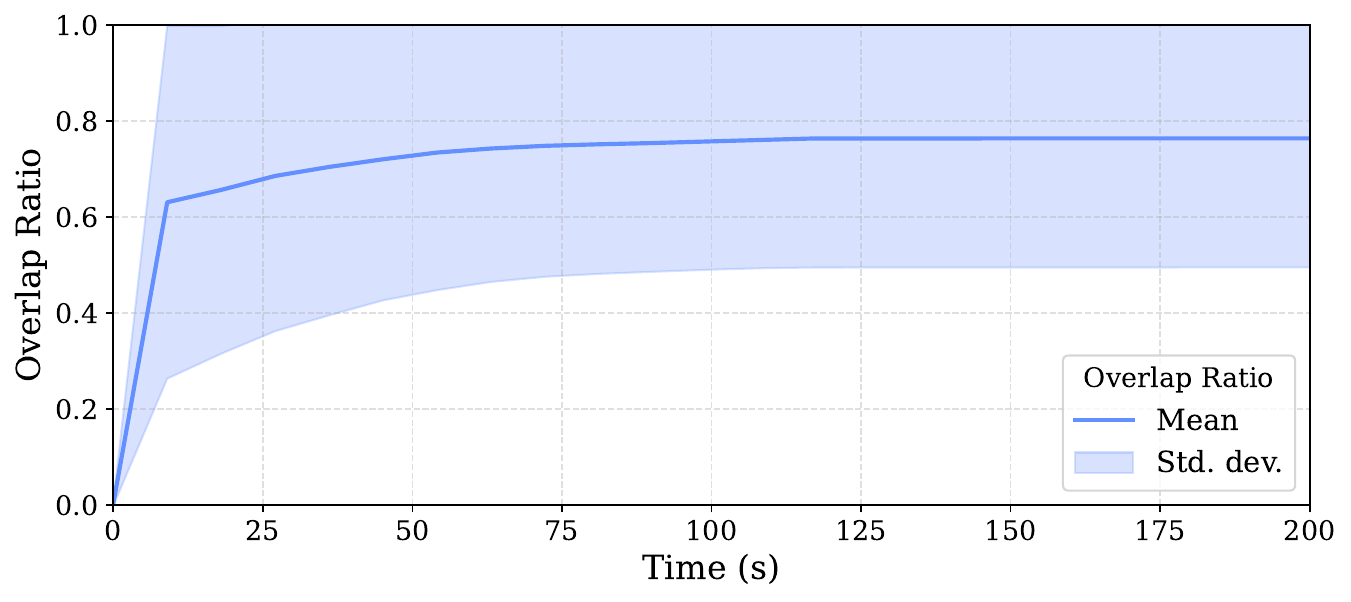}
    \label{subfig:scala1}
    \caption[]{Runtime-overlap progression on large table pairs on \git{} dataset}\label{fig:efficiency}
\end{figure}

\subsection{Scalability Analysis}\label{subsec:scalability}

\begin{figure}[t]
    \centering
    \subfloat[Varying Overlap Ratio]{
        \includegraphics[width=0.23\textwidth]{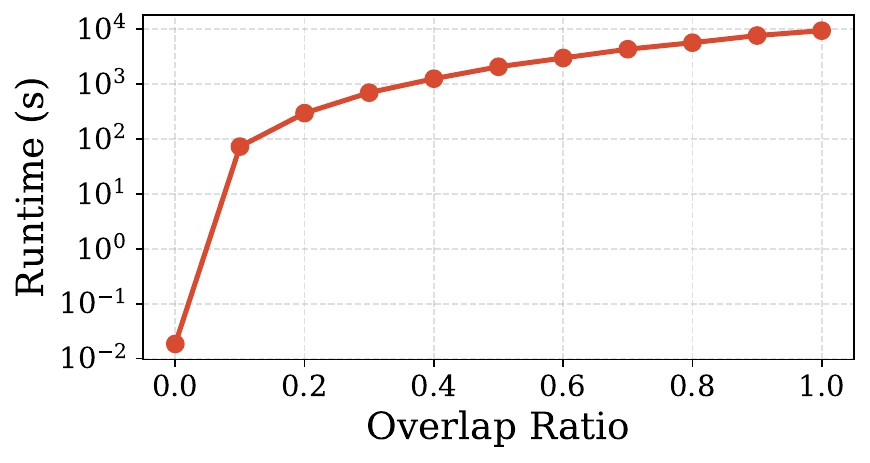}
        \label{subfig:vary_olap}
    }
    \subfloat[Varying Table Size]{
        \includegraphics[width=0.23\textwidth]{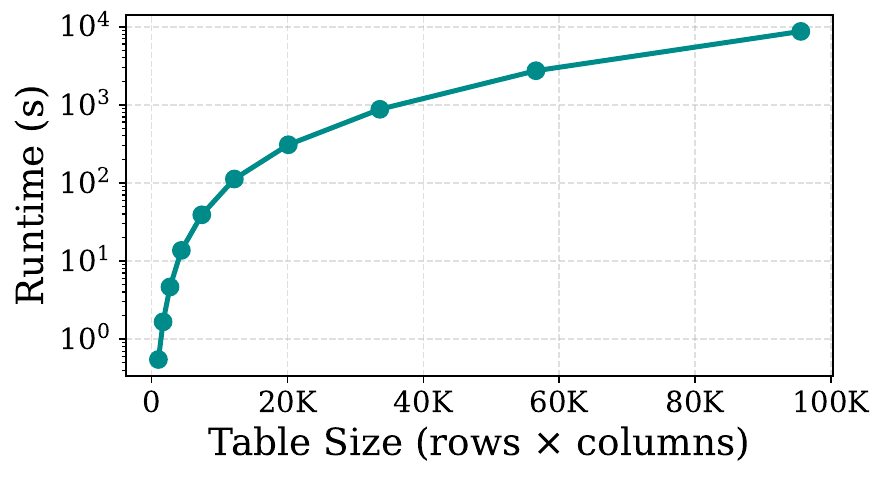}
        \label{subfig:vary_size}
    }
    % \vspace{0.5em}
    \caption[]{Scalability Study}\label{fig:scalability}
\end{figure}

Scalability is a critical yet challenging aspect of table overlap computation. As table dimensions grow or the overlap between them increases, the number of potential row and column alignments expands combinatorially, resulting in a combinatorially large search space. We evaluate the scalability of \hyper{} using synthetic test cases that systematically increase task difficulty along two controlled dimensions: the overlap ratio and the table size.

\smallskip\noindent\textbf{Varying Overlap Ratio.} We fix both tables to 1000~rows and 50~columns, and vary the overlap ratio from 0 to 1 in increments of 0.1, thereby increasing the proportion of overlapping cells while keeping the dimensions constant. Figure~\ref{subfig:vary_olap} shows that runtime increases with the overlap ratio, as higher overlap introduces more candidate mappings to explore. The growth remains well controlled across the entire range, showing that \hyper{} effectively prunes redundant mappings through its hypergraph-aware structure and bidomain-guided search.

\smallskip\noindent\textbf{Varying Table Size.} We fix the overlap ratio at 0.5 and progressively enlarge both tables by 30\% per step, from 100~rows and 10~columns up to $1050{\times}91$. This jointly increases the number of rows and columns, expanding the total number of cells, which is the dominant factor in computational complexity. Figure~\ref{subfig:vary_size} shows that the runtime follows a smooth logarithmic trend relative to table size, indicating that \hyper{} scales predictably and remains efficient as input size increases by more than an order of magnitude.

\smallskip Overall, these results demonstrate that \hyper{} achieves robust and predictable scalability under increasing overlap complexity and table size, validating its effectiveness for large-scale and highly overlapping real-world tables.

\begin{table}[t]
    \centering
    \small
    \begin{threeparttable}[H]
        \caption{Impact of pruning tolerance $\delta$ and depth $n$ on mean overlap ratio and runtime across table pairs}
        \label{tab:delta_pruning}
        \begin{tabular}{cc cc cc}
            \toprule
            \multirow{2}{*}{$\boldsymbol{\delta}$} & \multirow{2}{*}{$\boldsymbol{n}$} 
            & \multicolumn{2}{c}{\textbf{\wiki{}}} 
            & \multicolumn{2}{c}{\textbf{\git{}}} \\
            \cmidrule(lr){3-4} \cmidrule(lr){5-6}
            & & \textbf{Overlap Ratio} & \textbf{Time (s)} 
              & \textbf{Overlap Ratio} & \textbf{Time (s)} \\
            \midrule
            -- & -- & 0.63 & 0.36 & 0.69 & 2.70 \\
            \midrule
            \multirow{3}{*}{0.5} 
                & 10  & 0.62 & 0.24 & 0.69 & 2.15 \\
                & 20  & 0.61 & 0.18 & 0.69 & 1.89 \\
                & all & 0.56 & 0.12 & 0.68 & 1.71 \\
            \midrule
            \multirow{3}{*}{1.0} 
                & 10  & 0.61 & 0.24 & 0.69 & 1.95 \\
                & 20  & 0.59 & 0.18 & 0.68 & 1.78 \\
                & all & 0.55 & 0.12 & 0.68 & 1.64 \\
            \midrule
            \multirow{3}{*}{2.0} 
                & 10  & 0.61 & 0.18 & 0.69 & 1.85 \\
                & 20  & 0.57 & 0.18 & 0.68 & 1.68 \\
                & all & 0.44 & 0.12 & 0.62 & 1.48 \\
            \bottomrule
        \end{tabular}
        \begin{tablenotes}[flushleft]
          \footnotesize
          \item all = pruning is applied at every level
        \end{tablenotes}
    \end{threeparttable}
    % \vspace{-1em}
\end{table}

\subsection{Parameter Study}\label{subsec:paramstudy}

We analyze the impact of the pruning tolerance parameter $\delta$ and pruning depth parameter $n$ in \hyper{}'s pruning strategy. To balance efficiency and approximation quality, pruning is applied only during the first $n$ branch levels, where early-stage pruning eliminates large unpromising subtrees while deeper branches are explored more conservatively to preserve accuracy.

Table~\ref{tab:delta_pruning} reports the mean overlap ratio and runtime across table pairs under different pruning configurations. We vary $\delta \in \{0.5, 1.0, 2.0\}$ and $n \in \{10, 20\}$, and also include a configuration where pruning is applied at all levels ($n = \text{all}$). The baseline without pruning achieves the highest overlap ratios but also the longest runtimes. Moderate pruning with $\delta = 1.0$ and $n = 10$, achieves a favorable trade-off between efficiency and accuracy. On \wiki{}, runtime decreases by 33\%, from 0.36\,s to 0.24\,s, while maintaining a high mean overlap ratio around 0.61. Similarly, on \git{}, pruning reduces runtime from 2.70\,s to 1.95\,s with no change in overlap ratio (0.69). Increasing $n$ or $\delta$ further accelerates the search but introduces more noticeable degradation in overlap ratio.

Collectively, these results show that tolerance-based pruning effectively improves computational efficiency while preserving solution quality. Early-stage pruning within the first 10–20 levels provides the best balance, offering significant runtime reduction without substantial loss of overlap accuracy, especially on large and complex tables.
\section{Related Work}\label{sec:related_work}

We organize related work into five areas: (i)~table overlap, (ii)~data discovery, (iii)~data integration and matching, (iv)~maximum common subgraph, and (v)~hypergraph-based modeling for tables.

\smallskip\noindent\textbf{Table Overlap.}
The most closely related work is \sloth{}~\cite{zecchini2024determining}, which finds the largest overlap between two tables by permuting rows and columns but constrains the overlap to be a \emph{contiguous rectangular region}. In contrast, our formulation generalizes to \emph{arbitrary, non-contiguous shapes}, capturing overlaps beyond rectangular alignment. Armadillo~\cite{pugnaloni2025table} extends \sloth{}’s rectangular formulation to \emph{estimate overlap sizes} without materializing the overlapping cells, which is useful for coarse filtering in large-scale retrieval or blocking. CENTS~\cite{xiao2025cents} instead selects informative table regions via similarity-based scoring for LLM understanding, prioritizing semantic salience rather than exact shared content. Thus, \salto{} complements these lines of work by recovering the \emph{exact} shared cells needed in tasks such as identifying copied or unchanged regions. Other studies detect duplicate tables through overlap signals. \cite{koch2023duplicate} applies the XASH hash function~\cite{esmailoghli2022mate} to identify tables sharing identical tuples under column permutations, which captures only perfect table duplicates or full row containment. \cite{bleifuss2021structured} matches tables across multiple Wikipedia page revisions using Ruzicka similarity to track table evolution. These approaches focus on global similarity or duplication but not partial or shape-agnostic overlap.

\smallskip\noindent\textbf{Data Discovery.}
Data discovery in data lakes~\cite{miller2018open, das2012finding, zhang2020finding, bogatu2020dataset} is typically divided into unionable or joinable table search. For unionable search, \cite{nargesian2018table} first introduces the problem and measures attribute compatibility based on syntactic and semantic similarity. SANTOS~\cite{khatiwada2023santos} enriches this process with knowledge bases and column relationships, while Starmie~\cite{fan2023semantics} uses pre-trained language models to learn contextualized column embeddings for semantic matching. For joinable table search, JOSIE~\cite{zhu2019josie} retrieves top-k joinable tables via overlap set similarity search, PEXESO~\cite{dong2021efficient} embeds text values and applies similarity joins with efficient indexing, and MATE~\cite{esmailoghli2022mate} proposes XASH hash function for efficient multi-column join discovery. While unionable search focuses on semantic compatibility across attributes, and joinable search focuses on column overlap to identify joinable tables, our goal is to localize fine-grained cell-level overlaps under arbitrary row and column permutations.

\smallskip\noindent\textbf{Data Integration and Matching.}
Our problem is loosely related to schema matching and entity resolution. Schema matching aligns semantically equivalent attributes across tables, as in COMA~\cite{hong2002coma}, COMA++~\cite{aumueller2005schema}, and the recent LLM-based Magneto~\cite{liu2025magneto}. Entity resolution identifies tuples referring to the same real-world entity, such as ZeroER~\cite{wu2020zeroer} and Sudowoodo~\cite{wang2023sudowoodo}.
Unlike these methods, our work operates directly on raw tables without schema metadata or domain knowledge, and recovers exact overlapping cells.

\smallskip\noindent\textbf{Maximum Common Subgraph.}
Since no prior work addresses the maximum common subhypergraph problem, we review methods for the related maximum common subgraph problem. Non-learning approaches include McSplit~\cite{mccreesh2017partitioning}, an exact branch-and-bound algorithm that partitions the search space to prune infeasible candidate node matches, and RRSplit~\cite{yu2025fast}, which introduces tighter bounds and reductions to reduce redundant computations. Learning-based approaches extend McSplit with various techniques for node selection and branching. These include reinforcement learning method~\cite{liu2020learning}, GNN-based Deep Q-Network in GLSearch~\cite{bai2021glsearch}, and Long-Short Memory with Leaf vertex Union Match in McSplit+LL~\cite{zhou2022strengthened}. McSplitDAL~\cite{liu2023hybrid} further improves efficiency with a new value function and hybrid selection strategy. 

\smallskip\noindent\textbf{Hypergraph Modeling for Tables.}
Additionally, we highlight HyTrel~\cite{chen2023hytrel}, which uses hypergraphs to capture permutation invariance and structural properties for table representation learning. While sharing the use of hypergraphs, our construction is purpose-built for the table overlap problem, with structural constraints that enable a constrained maximum common subhypergraph search that faithfully models shape-agnostic overlaps.
\section{Conclusion and Future Work}

We introduced a novel shape-agnostic definition of largest table overlap, cast it as a maximum common subhypergraph problem via a hypergraph-based representation, and proved their equivalence. We established that the problem is NP-hard to approximate, and proposed \hyper{}, an efficient exact solver that exploits hypergraph structure for effective pruning and search. Extensive experiments demonstrate \hyper{} achieves accurate and scalable shape-agnostic overlap detection, with case studies highlighting its robustness across diverse real-world scenarios. Future work includes developing scalable approximation strategies, integrating learning-based guidance for search, and exploring richer overlap notions that capture contextual or semantic relevance beyond exact value matching.
% \input{section_tex/9_acknowledgments}

%%
%% The next two lines define the bibliography style to be used, and
%% the bibliography file.
% \clearpage
\bibliographystyle{ACM-Reference-Format}
\bibliography{references}

%%
%% If your work has an appendix, this is the place to put it.
% \clearpage
\appendix
\section*{Appendix}

\section{Prompt Template for the \gpt{} Baseline}\label{app:prompt}

We employ a large language model to predict the row and column permutations that maximize shape-agnostic table overlap. The prompt provides the tables in a structured row and column format and instructs the model to output a JSON object specifying the maximum overlap size along with the corresponding row and column permutations for each table. Figure~\ref{fig:prompt} illustrates the input format, expected output, and an illustrative task example.

% \begin{gptpromptbox}
% \begin{lstlisting}[basicstyle=\footnotesize\ttfamily, breaklines=true, breakindent=0pt, columns=fullflexible]
% \begin{figure*}[t]
% \centering
\onecolumn
\begin{evbox}
You are a combinatorial-optimization expert.

# Goal:
Given two rectangular tables of cell values (strings or numbers) you must find row and column permutations for each table that maximise the number of cells whose values match exactly in the same (row, column) position after the permutations.                                                        

Return:
a. The maximum overlap size (i.e., number of overlapping cells).
b. The four permutations (row and column for each table) that achieve this.  

Think step-by-step internally, but output only the final JSON object described in "Output format". Do not reveal intermediate reasoning.

# Input Format:
Each table is given as rows of values, with each row represented using pipes "|" to separate columns. 
                                                        
## Input Example:

Table 1: 
| A | B |
| C | D |
| E | F |

Table 2: 
| F | E |
| B | A |
| D | C |

# Output Format:
Please output only the raw JSON object without any explanation or formatting - do not wrap the output in triple backticks or add a language label. Format:
{{
  "max_overlap_size": int, // total number of matching cells after optimal permutation
  "row_perm_table1": list of int, // new row order for table1 (0-based indices)
  "col_perm_table1": list of int, // new column order for table1
  "row_perm_table2": list of int, // new row order for table2
  "col_perm_table2": list of int, // new column order for table2
}}

## Output Example:
{{
  "max_overlap_size": 5,
  "row_perm_table1": [2, 0, 1],
  "col_perm_table1": [0, 1],
  "row_perm_table2": [0, 1, 2],
  "col_perm_table2": [1, 0],
}}

# Step-by-step Example:

Starting with the example tables:
                                                        
Table 1:
| A | B |
| C | D |
| E | F |

Table 2:
| F | E |
| X | A |
| D | C |

## Step 1: Apply row permutation to Table 1: [2, 0, 1]

Table 1 rows reordered:
| E | F |  (row 2)
| A | B |  (row 0)
| C | D |  (row 1)

## Step 2: Apply column permutation to Table 1: [0, 1] (no change)

Table 1 columns stay the same:
| E | F |
| A | B |
| C | D |

## Step 3: Apply row permutation to Table 2: [0, 1, 2] (no change)

Table 2 rows stay the same:
| F | E |
| X | A |
| D | C |

## Step 4: Apply column permutation to Table 2: [1, 0]

Table 2 columns reordered:
| E | F |
| A | X |
| C | D |

## Step 5: Compare the tables cell-by-cell:

Both tables now look like:
                                               
Table 1:
| E | F |
| A | B |
| C | D |

Table 2:
| E | F |
| A | X |
| C | D |

All 5 cells (E, F, A, C, D) match exactly at same position, which is the maximum overlap size.

## Step 6: Output the results:
{{
  "max_overlap_size": 5,
  "row_perm_table1": [2, 0, 1],
  "col_perm_table1": [0, 1],
  "row_perm_table2": [0, 1, 2],
  "col_perm_table2": [1, 0],
}}

# Input:
Table 1:
{table1}

Table 2: 
{table2}

# Output:
\end{evbox}
\captionof{figure}{Prompt template for finding \salto{}}\label{fig:prompt}
\twocolumn{}
% \end{figure*}
% \end{lstlisting}
% \end{gptpromptbox}

\end{document}